\documentclass[a4paper,UKenglish,cleveref,autoref,thm-restate]{lipics-v2021}

\hideLIPIcs

\usepackage{amsmath}
\usepackage{amsfonts}
\usepackage{amssymb}
\usepackage{graphicx}
\usepackage{doi}
\usepackage{amsthm}
\usepackage{xcolor}
\usepackage{todonotes}
\usepackage{hyperref}
\usepackage{xfrac}
\usepackage{mathtools}
\usepackage{algorithm}
\usepackage{algpseudocodex}

\newtheorem{question}[theorem]{Question}

\title{A Near-Linear-Time Algorithm for Finding a Well-Spread Perfect Matching in Bridgeless Cubic Graphs}

\titlerunning{A Near-Linear-Time Algorithm for Finding a WSPM in Bridgeless Cubic Graphs}

\author{Babak Ghanbari}
{Computer Science Institute of Charles University, Prague, Czech Republic}
{babak@iuuk.mff.cuni.cz}
{}{%
Supported by grant 25-16627S of the Czech Science Foundation.
}

\author{Robert \v{S}\'{a}mal}
{Computer Science Institute of Charles University, Prague, Czech Republic}
{samal@iuuk.mff.cuni.cz}
{}{%
Supported by grant 25-16627S of the Czech Science Foundation.
}

\authorrunning{B. Ghanbari and R. \v{S}\'{a}mal}

\Copyright{Babak Ghanbari and Robert \v{S}\'{a}mal}

\keywords{bridgeless cubic graphs, well-spread perfect matching, edge cuts, cactus representation}

\category{}
\relatedversion{}

\acknowledgements{We thank Matt DeVos for helpful discussion.}

\nolinenumbers

\makeatletter
\def\subjclassHeading{}
\gdef\@ccsdescString{\relax}
\makeatother

\begin{document}

\maketitle

\begin{abstract}
  We present a near-linear-time algorithm that, given a bridgeless cubic graph, finds a perfect matching intersecting every 3-edge-cut in exactly one
  edge. This improves over a cubic algorithm of Boyd et al. for the same problem, and over our previous algorithm, which worked only for 3-edge-connected
  graphs. The main ingredient is a cactus representation of the 2-edge-cuts, together with an efficient update procedure under 2-cut reductions.
\end{abstract}

\section{Introduction}
It is well known \cite{JuliusPetersen} that every bridgeless cubic (i.e., 3-regular) graph has a perfect matching. 
Edmonds \cite{Edmonds_1965} was the first to give a polynomial-time algorithm for finding a perfect matching (in any graph admitting one). 
The best algorithm for finding a perfect matching in bridgeless cubic graphs is currently due to 
Diks and Sta\'nczyk \cite{Diks} with time complexity of $O(n \log^2 n)$. 

Less well known is the fact that a bridgeless cubic graph has a perfect matching that does not contain any 3-edge-cut entirely: 
If $C$ is a set of three edges forming a minimal edge-cut, then every perfect matching $M$ satisfies $|M\cap C| = 1$ or $3$. 
There exists a perfect matching $M$ for which this intersection has size~$1$ for every 3-edge-cut $C$; we call such a matching a 
\emph{well-spread perfect matching}, or \emph{WSPM} for short. The existence of a WSPM was proved 
by Kaiser, Kr{\'a}l’ and Norine~\cite{Kaiser} (and used to study how large fraction of the edge set can be covered by two, three, etc. 
perfect matchings). 

A completely different motivation for the same question was given by Boyd, Iwata and Takazawa~\cite{boyd}. Their focus is on the 
complement of the perfect matching---a 2-factor in the given graph that intersects every 3-edge-cut. They provide a polynomial-time
algorithm to find such a 2-factor and point out that a Hamiltonian cycle is also such a 2-factor. Thus, their result can be understood 
as an approximation approach to find Hamiltonian cycle (TSP tours) in cubic graphs---an NP-complete problem and, indeed, 
such a cycle need not exist in a cubic bridgeless graph. They also showed that their algorithm helps to achieve a better approximation 
algorithm for a smallest 2-edge-connected subgraph. 

In our previous paper \cite{GS25} we improved the $O(n^3)$ algorithm of Boyd et al.\ to time $O(n \log^4 n)$, but with a significant 
limitation: our algorithm only worked on 3-edge-connected graphs. Indeed, the main idea in that paper was to use a tree representation 
of all 3-edge-cuts and efficiently update it. In this paper we remove this limitation and present an algorithm to find a WSPM in any 
bridgeless cubic graph. (For graphs with bridges, no perfect matching may exist at all, let alone well-spread one.)
The challenge is that in the presence of 2-edge-cuts, the 3-edge-cuts no longer have a tree-like structure and even their number can be 
exponential. Also, the structure of 2-edge-cuts is more complicated than the structure of 3-edge-cuts in a 3-edge-connected graph. 

The algorithm uses a prescribed-edge version of the 3-edge-connected algorithm: in a 3-edge-connected cubic graph, we can find a WSPM either containing or avoiding a
specified edge. This allows the matchings of the reduced pieces to be glued consistently.
In addition, we use the cactus representation of 2-edge-cuts (and its efficient update) to deal with the general case. 
The main technical contribution of the present paper is to show that
the cactus representation can be updated throughout the reduction process while preserving enough information to glue prescribed well-spread perfect
matchings in the pieces.
The main result we obtain in this paper is the following: 

\begin{restatable}{theorem}{mainthm}\label{thm:main}
Algorithm~\ref{alg:wspm} returns a well-spread perfect matching in a bridgeless cubic graph on $n$ vertices in time $O(n\log^4 n)$.
\end{restatable}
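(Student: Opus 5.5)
The proof splits into correctness and running time, and both follow the recursive structure of Algorithm~\ref{alg:wspm}. If the input graph $G$ is 3-edge-connected, we call the prescribed-edge algorithm of \cite{GS25} directly. Otherwise we pick a 2-edge-cut $\{e_1,e_2\}$ and perform the standard 2-cut reduction. For $e_1=a_1b_1$ and $e_2=a_2b_2$ with $a_i$ on side $A$ and $b_i$ on side $B$, we delete both edges and add $a_1a_2$ and $b_1b_2$. This gives two smaller bridgeless cubic graphs $G_A$ and $G_B$, possibly with parallel edges.

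\textbf{Correctness.} The structural lemma to establish first is a correspondence between 3-edge-cuts.
\begin{itemize}
\item Every 3-edge-cut $C$ of $G$ either lies entirely inside one side, or contains exactly one of $e_1,e_2$ together with two edges on one side.
\item In the first case $C$ is a 3-edge-cut of $G_A$ or of $G_B$, possibly after replacing $e_i$ by the new edge.
\item In the second case $C$ corresponds to a 3-edge-cut of the piece containing those two edges, with $e_i$ replaced by the new edge $a_1a_2$ (or $b_1b_2$).
\end{itemize}
By parity, a perfect matching of $G$ contains either both of $e_1,e_2$ or neither. So we glue a WSPM $M_A$ of $G_A$ with a WSPM $M_B$ of $G_B$ that agree on whether the new edge is used. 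The result $M$ contains $e_1,e_2$ exactly when $M_A$ contains $a_1a_2$. Using the correspondence, each 3-cut of $G$ meets $M$ exactly once, because its image meets $M_A$ or $M_B$ exactly once.

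The agreement is where prescribed edges come in. We recursively find $M_A$ with no constraint. We then demand that $M_B$ contain $b_1b_2$ if and only if $a_1a_2\in M_A$. Hence every recursive call carries a set of prescribed edges, one per incident reduction, each either forced in or forced out. We must show these constraints can always be satisfied. To do this, we process the decomposition along a rooted traversal of the cactus, so that every leaf piece, which is 3-edge-connected, inherits exactly one prescribed edge, namely from its parent. The theorem of \cite{GS25} supplies a WSPM containing or avoiding a single specified edge.

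This is the main correctness obstacle. A piece may be a cycle of the cactus with several virtual edges, and each of them would impose a constraint. To handle it, we would decompose along a cactus cycle by cutting one 2-cut at a time. Each intermediate piece then has a single virtual edge toward its parent, and the remaining virtual edges are resolved after the child pieces are solved. Equivalently, we choose the constraint direction so that children adapt to parents. Small degenerate pieces must be checked by hand: 2-vertex graphs with a triple edge, and graphs with parallel edges.

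\textbf{Running time.} We keep the cactus representation of all 2-edge-cuts, which is computable in near-linear time. We show each reduction updates it in time proportional to the smaller side, or in amortized polylogarithmic time: contracting one side of a cactus cycle splits the cactus into two cactuses. The pieces are edge-disjoint apart from $O(1)$ new edges per reduction, and there are $O(n)$ reductions, so the total size of all 3-edge-connected leaves is $O(n)$. The \cite{GS25} algorithm runs in $O(m\log^4 m)$ on each leaf, and summing gives $O(n\log^4 n)$ by superadditivity.

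The main time obstacle is that the cactus update must not cost linear time per cut. To avoid this, we plan to pay $O(\log n)$ per moved element via a smaller-half argument. Alternatively, we could read all 2-cut pieces off the cactus at once, since the cactus directly encodes the decomposition into 3-edge-connected components and cycles.
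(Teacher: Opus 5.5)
There is a genuine gap already in the gluing step: the 3-cut correspondence you state is false. A 3-edge-cut of $G$ can contain two edges $f_1,f_2$ of one side and one edge $f_3$ of the other side while containing neither $e_1$ nor $e_2$. This happens whenever $\{f_1,f_2,e_1\}$ is a 3-edge-cut and $\{f_3,e_1\}$ is a 2-edge-cut, because replacing an edge of a 3-edge-cut by an equivalent edge gives again a 3-edge-cut. Concretely, join four copies $B_1,\dots,B_4$ of $K_4$ minus an edge cyclically by edges $g_{12},g_{23},g_{34},g_{41}$. Take $\{e_1,e_2\}=\{g_{41},g_{23}\}$, let $f_1,f_2$ be the two other edges at the end of $g_{12}$ in $B_1$, and let $f_3=g_{34}$. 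Such a cut has no 3-cut image in a single piece. In $G_1$ it becomes the 3-edge-cut $\{f_1,f_2,e_1'\}$, but in $G_2$ it becomes the 2-edge-cut $\{f_3,e_2'\}$. To get $|M\cap\{f_1,f_2,f_3\}|=1$ you must show that $\{f_3,e_2'\}$ is a 2-edge-cut of $G_2$. You then combine the parity fact of Lemma~\ref{lem:two_edge_cut_matching}, namely $f_3\in M_2\iff e_2'\in M_2$, with $|M_1\cap\{f_1,f_2,e_1'\}|=1$. This is case~4 in the proof of Theorem~\ref{thm:2cut-gluing}, and your sentence ``its image meets $M_A$ or $M_B$ exactly once'' does not cover it. Separately, your assertion that a cut containing $e_1$ has its other two edges on one side is true, but it also needs an argument; the paper derives it from bridgelessness of the pieces.

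Second, you never analyse Algorithm~\ref{alg:wspm} itself, and the two issues you call the main obstacles are left as plans. The algorithm resolves both by always reducing at a degree-$2$ node $x$ of the current cactus, with incident edges $a=xy$ and $b=xz$; your proposal does not use this choice.

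\emph{Cactus maintenance.} The missing ingredient is Lemma~\ref{lem:cactus-reduction}. Performing the $\{a,b\}$-reduction in $T$, deleting loops and the isolated node $x$, and restricting $\varphi$ again yields cactus representations of all 2-edge-cuts of both pieces. The nontrivial pairs are those involving the new edge, which uses Lemma~\ref{lem:2-cut-reduction}. It follows that the separated piece $G_i$ has vertex set $\varphi^{-1}(x)$ and an edgeless cactus. Hence $G_i$ is 3-edge-connected, so Theorem~\ref{thm:edge_forced_wspm} and Corollary~\ref{thm:edge_forced_wspm_not_in} apply to it.

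\emph{Running time.} Each cactus update is $O(1)$, and there are at most $2(n-1)$ iterations by Observation~\ref{CactusSize}. So no smaller-half amortization is needed.

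\emph{Single prescription.} This comes from processing the records in reverse. $H_k$ is solved unconstrained, and each $G_i$ receives exactly one prescription, on its own new edge $e_2'$, according to whether $e_1'$ lies in the already glued matching of $H_i$. New edges from earlier reductions that happen to lie in $G_i$ are constrained later, from the side of the corresponding $G_j$.

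Without these facts your argument does not show that the pieces are 3-edge-connected, or that no call receives two prescriptions (which can be unsatisfiable, e.g.\ two forced edges at one vertex). Nor does it show that Algorithm~\ref{alg:wspm} runs in the stated time. Only your final summation $\sum_i n_i\log^4 n_i\le n\log^4 n$ matches the paper.
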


This also strengthens our result from \cite{Ghanbari}: the famous Cycle Double Cover conjecture \cite{szekeres_1973, seymour1979sums} says (in one equivalent formulation) that 
every bridgeless cubic graph has a surface embedding with no singular edges (edges with the same face on both sides). We can find an approximation of
this: an embedding with at most $n/10$ singular edges. We can do this in near-linear time, now for any bridgeless cubic graph. 

\section{2-edge-cuts}

Graphs may have parallel edges but no loops; a graph in which every vertex has degree $3$ is called \emph{cubic}.  
A \emph{cycle} is a connected $2$-regular graph.  
A \emph{bridge} in a graph $G$ is an edge whose removal increases the number of connected components of $G$; equivalently, it is an edge that is not contained in any cycle of $G$.  
A graph is \emph{bridgeless} if it contains no bridges.  
A connected graph is \emph{$k$-edge-connected} if it remains connected after the removal of any set of fewer than $k$ edges.

Let $G$ be a bridgeless graph. Two edges $e_1$ and $e_2$ are called \emph{equivalent}, 
written $e_1 \sim e_2$, if $e_1=e_2$ or $\{e_1,e_2\}$ is a $2$-edge-cut of $G$.  
If $e_1 \sim e_2$, then for every cycle $C$ of $G$, we have $e_1 \in E(C)$ if and only if $e_2 \in E(C)$.  
In particular, $\sim$ is an equivalence relation on $E(G)$.

The following lemma is straightforward and well-known.

\begin{lemma}\label{lem:two_edge_cut_matching}
    Let $G$ be a bridgeless cubic graph. If $e_1 \sim e_2$, then for every perfect matching $M$ of 
    $G$, either both $e_1$ and $e_2$ belong to $M$, or neither of them does.
\end{lemma}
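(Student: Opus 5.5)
The plan is a parity argument on one side of the $2$-edge-cut. If $e_1=e_2$ there is nothing to prove, so assume $\{e_1,e_2\}$ is a $2$-edge-cut of $G$.

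First I fix the shape of the cut. Since $G$ is bridgeless, $\{e_1,e_2\}$ is a minimal edge-cut: removing either edge alone does not disconnect $G$. I will assume $G$ is connected; otherwise I work inside the component containing $e_1$ and $e_2$, which must both lie in the same component. Then $G-\{e_1,e_2\}$ has exactly two components, and I let $X$ be the vertex set of one of them. By minimality, each of $e_1,e_2$ has exactly one endpoint in $X$, so the set of edges leaving $X$ is exactly $\delta(X)=\{e_1,e_2\}$.

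Second, I determine the parity of $|X|$ by double counting degrees inside $X$. Because $G$ is cubic,
\[
3|X|=\sum_{v\in X}\deg(v)=2|E(G[X])|+|\delta(X)|=2|E(G[X])|+2 .
\]
The right-hand side is even, so $3|X|$ is even, and therefore $|X|$ is even.

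Third, I bring in the matching. Let $M$ be a perfect matching of $G$. Each vertex of $X$ is covered by exactly one edge of $M$. Such an edge either lies inside $G[X]$, in which case it covers two vertices of $X$, or lies in $\delta(X)$, in which case it covers one. This gives
\[
|X| = 2|M\cap E(G[X])| + |M\cap\{e_1,e_2\}| ,
\]
so $|M\cap\{e_1,e_2\}|\equiv |X|\equiv 0 \pmod 2$. Hence $M$ contains both $e_1$ and $e_2$ or neither, which is the statement.

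The only point needing care is the first step. There I must make sure that $\delta(X)$ is exactly $\{e_1,e_2\}$ and that neither edge has both endpoints in $X$. This is exactly what minimality, coming from bridgelessness, provides.
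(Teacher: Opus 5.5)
Your proof is correct and complete. Minimality gives $\delta(X)=\{e_1,e_2\}$; this is built into the paper's definition of a $2$-edge-cut and is also forced by bridgelessness. The degree count then gives $|X|$ even, and counting how $M$ covers $X$ gives $|M\cap\{e_1,e_2\}|$ even.

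The paper itself gives no proof, calling the lemma straightforward and well known. The remark it places just before the lemma (equivalent edges lie on exactly the same cycles) suggests a cycle-based route instead. In a cubic graph, $E(G)\setminus M$ is a $2$-factor, i.e.\ a disjoint union of cycles. So $e_1\notin M$ means $e_1$ lies on a cycle of this $2$-factor, that cycle must contain $e_2$ as well, and hence $e_2\notin M$; the converse is symmetric. Given the cycle property this is a one-line argument. However, the cycle property is itself a parity fact (a cycle meets every edge cut in an even number of edges), so both routes rest on the same phenomenon.

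Your version is self-contained and proves something more general. The identity $3|X|=2|E(G[X])|+|\delta(X)|$ shows $|X|\equiv|\delta(X)| \pmod 2$, hence $|M\cap\delta(X)|\equiv|\delta(X)| \pmod 2$ for every edge cut $\delta(X)$. This also yields the introduction's claim that a perfect matching meets every $3$-edge-cut in $1$ or $3$ edges.
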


A \emph{well-spread} perfect matching in a bridgeless cubic graph $G$ is a perfect matching $M$ that contains exactly
one edge from each $3$-edge-cut of $G$. 
A \emph{$k$-edge-cut} is an inclusion-minimal edge cut of size $k$, that is 
a set of $k$ edges whose removal increases the number of connected components of~$G$
and each of them is needed.


Let $\{e_1, e_2\}$ be a $2$-edge-cut where $e_1 = u_1v_1$, $e_2=u_2v_2$. We define a \emph{$2$-cut reduction} (more specifically, $\{e_1, e_2\}$-reduction) of a graph $G$ as follows. Construct a bridgeless cubic graph $H$ from $G$ by removing $e_1$ and $e_2$ and adding two new edges
$e'_1 = u_1u_2$ and $e'_2=v_1v_2$ to the edges of $G$. Then, $V(H) = V(G)$ 
and $E(H) = (E(G)\backslash\{e_1 = u_1v_1, e_2=u_2v_2\}) \cup \{e'_1 = u_1u_2, e'_2=v_1v_2\}$. Therefore, if $H_1$ and $H_2$ are the components of
$G-\{e_1,e_2\}$ with $u_1,u_2\in V(H_1)$ and $v_1,v_2\in V(H_2)$, then $H = (H_1 + e'_1) \cup (H_2 + e'_2)$ is a $2$-cut reduction of $G$. See the
following figure. 

\begin{center}
    \includegraphics[scale=0.18]{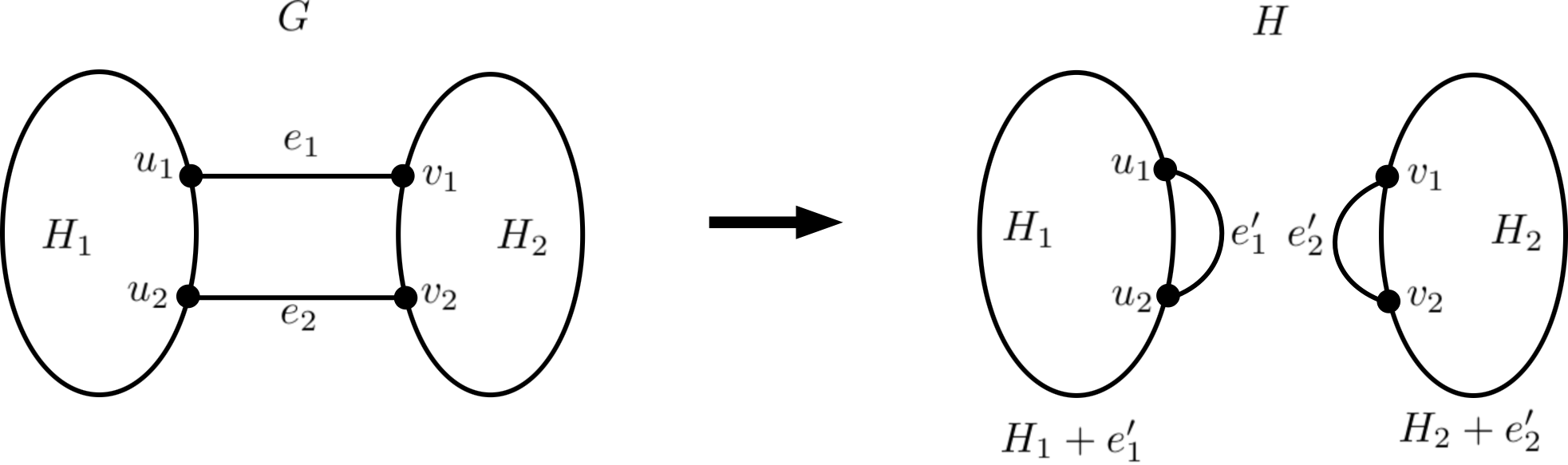}
\end{center}

\begin{lemma}\label{lem:2cut-reduction-bridgeless}
A $2$-cut reduction of a bridgeless cubic graph is again a bridgeless cubic graph.
\end{lemma}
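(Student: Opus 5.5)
We have to check two things: that $H$ is cubic, and that $H$ has no bridge. Here "bridgeless" is read componentwise, since $H$ is the disjoint union $(H_1+e'_1)\cup(H_2+e'_2)$.

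\textbf{Cubicity.} Each of $u_1,v_1,u_2,v_2$ loses exactly one incident edge and gains exactly one, so all degrees stay $3$. It remains to rule out loops, i.e. to show $u_1\neq u_2$ and $v_1\neq v_2$. Suppose $u_1=u_2=u$. The third edge $f$ at $u$ then separates $\{u\}\cup V(H_2)$ from the rest of $V(H_1)$, because $\{e_1,e_2\}$ is the whole edge boundary between $V(H_1)$ and $V(H_2)$. So $f$ is a bridge of $G$, a contradiction. The degenerate case $V(H_1)=\{u\}$ is impossible, since then $u$ would have degree $2$ in $G$. The same argument gives $v_1\neq v_2$. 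Parallel edges are allowed, so $e'_1$ or $e'_2$ being parallel to an existing edge causes no problem.

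\textbf{No bridges.} I first observe that $H_1$ and $H_2$ are connected. The set $\{e_1,e_2\}$ is a minimal cut in a connected graph $G$, so deleting it leaves exactly two components, namely $H_1$ and $H_2$. By symmetry it suffices to show that $H_1+e'_1$ is bridgeless.

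\emph{The edge $e'_1$.} It lies on a cycle formed by $e'_1$ together with a $u_1$–$u_2$ path in the connected graph $H_1$.

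\emph{An edge $f\in E(H_1)$.} Since $G$ is bridgeless, $f$ lies on a cycle $C$ of $G$. If $C$ avoids $e_1$ and $e_2$, then $C\subseteq H_1$ and we are done. Otherwise $C$ crosses the cut $\{e_1,e_2\}$, and a cycle meets a cut in an even number of edges, so $C$ uses both $e_1$ and $e_2$. Then $C$ splits into a $u_1$–$u_2$ path $P_1$ in $H_1$ and a $v_1$–$v_2$ path in $H_2$. Since $f\in E(H_1)$, $f$ lies on $P_1$, and $P_1+e'_1$ is a cycle of $H_1+e'_1$ containing $f$. This is a genuine cycle because $u_1\neq u_2$.

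\textbf{Main obstacle.} The only delicate points are the loop exclusion $u_1\neq u_2$, which uses bridgelessness of $G$ through the third edge at the shared vertex, and the precise reading of "bridgeless" for a possibly disconnected $H$. Everything else is the routine parity-of-cycle-crossing argument.
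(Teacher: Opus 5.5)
Your proof is correct and follows essentially the paper's argument: a degree count for cubicity, and for bridgelessness the observation that any excursion through $H_2$ (of a cycle in your version, of a path in the paper's contrapositive version) can be shortcut by the new edge $e'_1$, while $e'_1$ itself closes a $u_1$--$u_2$ path in $H_1$. You also verify that $u_1\neq u_2$ and $v_1\neq v_2$, so that no loops are created; the paper's degree count tacitly assumes this, and your argument for it (the third edge at a shared endpoint would be a bridge of $G$) is sound.
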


\begin{proof}
Let $G$ be a bridgeless cubic graph and let $\{e_1,e_2\}$ be a $2$-edge-cut,
where
$e_1=u_1v_1$, $e_2=u_2v_2.$
Let $G-\{e_1,e_2\}$ have components $H_1$ and $H_2$ such that
$u_1,u_2\in V(H_1)$ and $v_1,v_2\in V(H_2)$.
The $2$-cut reduction replaces $e_1$ and $e_2$ with
$e_1' = u_1u_2$ and
$e_2' = v_1v_2$, and
the resulting graph is $H=(H_1+e_1')\cup(H_2+e_2')$.
Since $G$ is cubic, every vertex has degree $3$.
Removing $e_1$ and $e_2$ decreases the degree of each of
$u_1,u_2,v_1,v_2$ by $1$.
Adding $e_1'$ increases the degree of $u_1$ and $u_2$ by $1$,
and adding $e_2'$ increases the degree of $v_1$ and $v_2$ by $1$.
All other vertices are unchanged.
Hence every vertex of $H$ has degree $3$, so $H$ is cubic.

Now we show that no edge of $H$ is a bridge.
First consider any original edge $f$ of $H_1$ (the same argument applies
to $H_2$).
If $f$ were a bridge of $H$, then it would also be a bridge of
$H_1+e_1'$.
We will show that it would also be a bridge of~$G$. 
Any path in $G-f$ connecting the two sides of $f$ either stays in $H_1$, or enters $H_2$ through one of $e_1,e_2$ and returns through the other.
The former is impossible as we are assuming $f$ is a bridge in~$H$. 
In the latter case, the subpath through $H_2$ can be replaced by the new edge $e'_1$. Hence such a path would also exist in $H_1+e'_1-f$,
contradicting that $f$ is a bridge there.

It remains to show that the new edges $e_1'$ and $e_2'$ are not bridges.
Since $G$ is bridgeless, the edge $e_1$ lies on a cycle $C$ in $G$.
Because $\{e_1,e_2\}$ is a $2$-edge-cut, any cycle containing $e_1$
must also contain $e_2$.
Removing $e_1$ and $e_2$ from $C$ leaves a $u_1$--$u_2$ path entirely
in $H_1$ and a $v_1$--$v_2$ path entirely in $H_2$.
Adding $e_1'$ closes the path in $H_1$ to form a cycle containing $e_1'$,
and adding $e_2'$ closes the path in $H_2$ to form a cycle containing
$e_2'$.
Therefore, both $e_1'$ and $e_2'$ lie on cycles in $H$ and are not bridges.
\end{proof}

\begin{lemma}\label{lem:2-cut-reduction}
Let $G$ be a bridgeless graph and let $\{e_1,e_2\}$ and $\{e_2,e_3\}$, and $\{e_1, e_3\}$ be
$2$-edge-cuts in $G$. Perform the $\{e_2,e_3\}$-reduction of $G$, and let
$G_1$ be the component of the resulting graph that contains $e_1$. Let
$e'_2$ be the new edge added to $G_1$ by the reduction.
Then $\{e_1,e'_2\}$ is a $2$-edge-cut in $G_1$.
\end{lemma}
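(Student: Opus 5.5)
Let $e_2=u_2v_2$ and $e_3=u_3v_3$. Removing $\{e_2,e_3\}$ splits $G$ into two components $H_1\ni u_2,u_3$ and $H_2\ni v_2,v_3$. The reduction adds $e'_2=u_2u_3$ inside $H_1$ and $e'_3=v_2v_3$ inside $H_2$. The edge $e_1$ lies in one of them, say $H_1$, so $G_1=H_1+e'_2$. (We may relabel so that the new edge in the component of $e_1$ is called $e'_2$.)

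The plan is to use the cut $\{e_1,e_2\}$ of $G$ to find a bipartition of $V(H_1)$ that $e_1$ and $e'_2$ both cross and nothing else crosses. Since $\{e_1,e_2\}$ is an inclusion-minimal cut, $G-\{e_1,e_2\}$ has exactly two components, with vertex sets $A$ and $B$. We orient the labels so that $u_2\in A$ and $v_2\in B$.

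\textbf{Step 1: $e_3$ lies inside one side.} The set $\{e_1,e_3\}$ is a cut, so $e_3$ is not a bridge of $G-e_1$. Hence $e_3\ne e_2$ does not cross $(A,B)$, since only $e_1,e_2$ cross it. So $u_3,v_3$ are both in $A$ or both in $B$.

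\textbf{Step 2: they lie in $B$.} Suppose instead $u_3,v_3\in A$. Then $G[B]$ is connected, contains $v_2$, and avoids $e_2,e_3$. Next we locate $e_1$. Since $e_1\in H_1$, both ends of $e_1$ lie in $H_1$, and $e_1$ crosses $(A,B)$, so one end $x$ of $e_1$ is in $B$. But $G[B]$ is connected in $G-\{e_2,e_3\}$, so $x$ and $v_2$ lie in the same component, namely $H_2$. This contradicts $x\in V(H_1)$. Hence $u_3,v_3\in B$.

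\textbf{Step 3: exhibit the cut in $G_1$.} Let $A_1=A\cap V(H_1)$ and $B_1=B\cap V(H_1)$. Both are nonempty: $u_2\in A_1$ and $u_3\in B_1$.
The edges of $H_1$ crossing $(A_1,B_1)$ are edges of $G$ crossing $(A,B)$ other than $e_2$, which leaves only $e_1$.
The new edge $e'_2=u_2u_3$ also crosses, since $u_2\in A_1$ and $u_3\in B_1$.
Therefore $\{e_1,e'_2\}$ is exactly the set of edges of $G_1$ crossing $(A_1,B_1)$, so it is an edge cut.

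\textbf{Step 4: minimality.} We need that neither edge alone is a cut of $G_1$. By Lemma~\ref{lem:2cut-reduction-bridgeless}, $G_1$ is bridgeless; in the non-cubic setting the same cycle argument from that proof applies. So neither $e_1$ nor $e'_2$ alone disconnects $G_1$, and $\{e_1,e'_2\}$ is a $2$-edge-cut.

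The main obstacle is Step 2, i.e.\ deciding which side of $(A,B)$ receives $e_3$. The approach is to combine connectivity of $G[B]$ with the fact that $e_1$ lies in $H_1$. The remaining steps are routine, apart from bookkeeping of which new edge is named $e'_2$.
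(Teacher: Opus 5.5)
Your proof is correct, and it takes a different route from the paper.

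\textbf{The paper's route.} The paper argues by contradiction. It assumes a $u_1$--$v_1$ path $P_1$ in $G_1-\{e_1,e'_2\}$ and splices it, together with $e_2$, with paths inside three pieces $C_1,C_2,C_3$. This gives a $u_3$--$v_3$ path in $G-\{e_1,e_3\}$, contradicting that $\{e_1,e_3\}$ is a cut. Two things are only implicit there. First, that $P_2\subseteq C_2$ avoids $e_3$ is asserted ``by construction,'' but it really depends on the configuration drawn in the figure. Second, the minimality of $\{e_1,e'_2\}$ is never checked: the proof conflates ``not a $2$-edge-cut'' with ``not a cut.''

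\textbf{Your route.} You exhibit the cut directly as $\delta_{G_1}(A\cap V(H_1))=\{e_1,e'_2\}$. Your Step~2 proves exactly the configuration fact the paper takes from the picture: $e_3$ lies in the component of $G-\{e_1,e_2\}$ not containing $u_2$, which is the paper's $C_2$. You then handle minimality explicitly via bridgelessness of $G_1$. The bridgelessness argument in Lemma~\ref{lem:2cut-reduction-bridgeless} indeed never uses cubicity, so citing it here is fine.

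\textbf{What each buys.} Your route never uses the hypothesis that $\{e_1,e_3\}$ is a $2$-edge-cut. That is legitimate, since it follows from the other two hypotheses by transitivity of $\sim$. It also shows $u_2\neq u_3$, so $e'_2$ is not a loop, which is relevant since the lemma is stated for general bridgeless graphs. The paper's version is shorter and matches the path-splicing style of its other lemmas; yours is more self-contained and proves both the cut and its minimality.

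\textbf{One slip to fix.} In Step~1 you write that because $\{e_1,e_3\}$ is a cut, $e_3$ is \emph{not} a bridge of $G-e_1$. The opposite is true: minimality of $\{e_1,e_3\}$ means $e_3$ \emph{is} a bridge of $G-e_1$. The sentence is not needed. Step~1 follows solely from $\delta_G(A)=\{e_1,e_2\}$ and $e_3\notin\{e_1,e_2\}$, so simply delete it.
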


\begin{proof}
Let $e_1=u_1v_1$, $e_2=u_2v_2$, and $e_3=u_3v_3$.
Let $C_1$ be the connected component of $G-\{e_1,e_3\}$ containing $u_1$ and $u_3$,
let $C_2$ be the connected component of $G-\{e_1,e_2\}$ containing $v_1$ and $u_2$,
and let $C_3$ be the connected component of $G-\{e_2,e_3\}$ containing $v_2$ and $v_3$
(see Figure~\ref{fig:2-cut-preservation}).

Since $\{e_2,e_3\}$ is a $2$-edge-cut, the graph $G-\{e_2,e_3\}$ has two
components. In the $\{e_2,e_3\}$-reduction, the component containing $e_1$
receives a new edge $e'_2=u_2u_3$, while the other component receives a new
edge $e'_3=v_2v_3$. Let $G_1$ be the component containing $e_1$.

Suppose that $\{e_1,e'_2\}$ is not a $2$-edge-cut in $G_1$.
Then there exists a $u_1$--$v_1$ path $P_1$ in $G_1-\{e_1,e'_2\}$.
The path $P_1$ avoids the new edge $e'_2$ and also avoids $e_1$. Therefore, $P_1$ is a $u_1$--$v_1$ path in $G-\{e_1\}$.

By the definition of $C_1,C_2,C_3$, there exists a $v_1$--$u_2$ path $P_2$ in $C_2$,
a $v_2$--$v_3$ path $P_3$ in $C_3$, and a $u_3$--$u_1$ path $P_4$ in $C_1$. Note that a $u_3$--$u_1$ path exists otherwise $e_1$ is a bridge.
Each of the paths $P_2,P_3,P_4$ avoids the edges $e_1$ and $e_3$ by construction,
and $P_1$ avoids $e_1$ by assumption.
Now the concatenation
$P_4 \;-\; P_1 \;-\; P_2 \;-\; e_2 \;-\; P_3$
is a $u_3$--$v_3$ path in $G-\{e_1,e_3\}$.
This contradicts the fact that $\{e_1,e_3\}$ is a $2$-edge-cut of $G$.
Therefore $\{e_1,e'_2\}$ is a $2$-edge-cut in $G_1$.
\end{proof}

\begin{figure}
    \centering
     \includegraphics[scale=0.14]{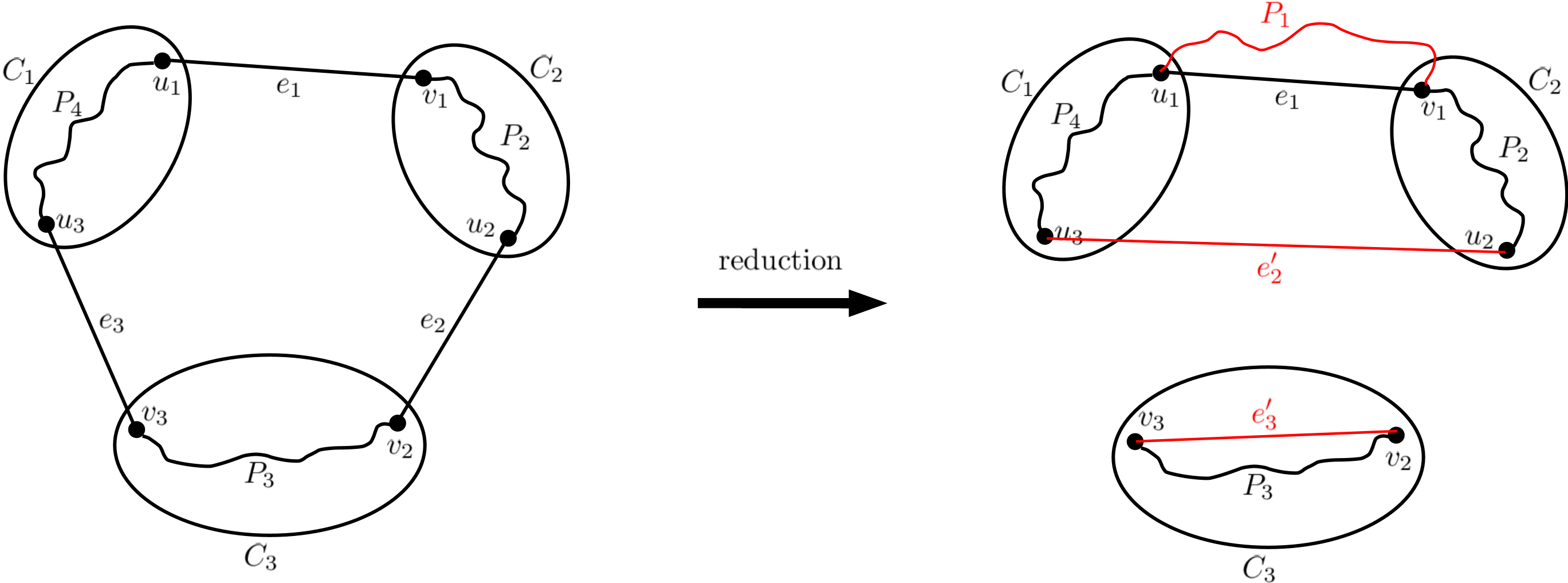}
    \caption{Demonstration of the paths $P_1, P_2, P_3$ and $P_4$ in the proof of Lemma~\ref{lem:2-cut-reduction}.}
    \label{fig:2-cut-preservation}
\end{figure}

\begin{lemma}
    Let $G$ be a graph, and $\{e_1, e_2\}$ be a 2-edge-cut in $G$. Suppose that $e_1$, $f_1$, and $f_2$ form a 3-edge-cut in $G$. Then $e_2$, $f_1$, and $f_2$ form a 3-edge-cut in $G$ as well.
\end{lemma}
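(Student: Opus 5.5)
The plan is to reduce everything to parity with respect to cycles, using the observation stated before Lemma~\ref{lem:two_edge_cut_matching}: since $e_1\sim e_2$, every cycle $C$ of $G$ contains $e_1$ if and only if it contains $e_2$. I will work inside the component of $G$ containing these edges, so I may assume $G$ is connected. Two standard facts will be used.
\begin{itemize}
\item[(a)] A set $D\subseteq E(G)$ is of the form $\delta(X)$ for some $X\subseteq V(G)$ if and only if $|D\cap E(C)|$ is even for every cycle $C$. The nontrivial direction goes by fixing a spanning tree $T$ and $2$-colouring the vertices according to the parity of the number of $D$-edges on the tree path from a root. Then each non-tree edge lies in $D$ exactly when its endpoints get different colours, because its fundamental cycle meets $D$ evenly.
\item[(b)] Every nonempty set $\delta(X)$ with $\emptyset\neq X\neq V(G)$ disconnects $G$. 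So an edge set is an edge cut in the paper's sense (inclusion-minimal) exactly when it is a nonempty $\delta(X)$ none of whose nonempty proper subsets is of this form.
\end{itemize}

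First I check that $e_2\notin\{f_1,f_2\}$. Otherwise, say $e_2=f_1$, the $2$-edge-cut $\{e_1,e_2\}$ would be a proper subset of $\{e_1,f_1,f_2\}$, contradicting the minimality of the latter. Similarly $e_1\notin\{f_1,f_2\}$, and $e_1\neq e_2$. Hence, for every cycle $C$,
\[
|E(C)\cap\{e_2,f_1,f_2\}| = |E(C)\cap\{e_1,f_1,f_2\}| .
\]
This holds because $e_1$ and $e_2$ are either both in $C$ or both outside it, while the contributions of $f_1,f_2$ coincide. The right-hand side is even since $\{e_1,f_1,f_2\}$ is a cut, so by (a) the set $\{e_2,f_1,f_2\}$ is $\delta(X)$ for some $X$. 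It is nonempty, so by (b) it disconnects $G$.

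Next I prove minimality by checking every nonempty proper subset $S$ of $\{e_2,f_1,f_2\}$.
\begin{itemize}
\item If $S\subseteq\{f_1,f_2\}$, then $S$ is a proper subset of the minimal cut $\{e_1,f_1,f_2\}$, so it is not a cut.
\item If $e_2\in S$, let $S'=(S\setminus\{e_2\})\cup\{e_1\}$, a proper subset of $\{e_1,f_1,f_2\}$. The same parity swap shows that $S$ meets every cycle evenly if and only if $S'$ does. Since $S'$ is not a cut, (a) gives a cycle meeting $S'$, and hence $S$, oddly. So $S$ is not of the form $\delta(X)$.
\end{itemize}
Note that $S=\{e_2\}$ is covered by the second case; it also follows directly from $e_2$ not being a bridge, since $\{e_1,e_2\}$ is minimal.

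The main care point is not a calculation but the bookkeeping between the two notions. The paper defines an edge cut as a minimal set whose removal increases the number of components, whereas the argument works with cocycles $\delta(X)$. I therefore need to state fact (a) cleanly, including the spanning-tree argument sketched above. I also need to handle the distinctness of $e_2$ from $f_1,f_2$ explicitly, since the parity swap would fail if $e_2$ coincided with one of the $f_i$.
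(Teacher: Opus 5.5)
Your proof is correct, but it takes a different route from the paper.

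The paper argues directly with paths. Let $A,B$ be the two sides of $G-\{e_1,f_1,f_2\}$, with $e_2$ inside $B$. An $x_2$--$y_2$ path in $G-\{e_2,f_1,f_2\}$ cannot use $e_1$, because it could never return to $B$. So it would be an $x_2$--$y_2$ path in $G-\{e_1,e_2\}$, contradicting that $\{e_1,e_2\}$ is a cut. This is short and local. As written, however, it only shows that deleting $\{e_2,f_1,f_2\}$ separates the ends of $e_2$, and it tacitly treats this as the same as being a $3$-edge-cut. Two things are left implicit:
\begin{itemize}
\item that $f_1$ and $f_2$ are needed, i.e.\ that neither $\{e_2,f_1\}$ nor $\{e_2,f_2\}$ is a cut (this would follow from transitivity of $\sim$);
\item that $e_2\notin\{f_1,f_2\}$.
\end{itemize}

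Your $\mathrm{GF}(2)$ argument in effect exhibits $\{e_2,f_1,f_2\}$ as the symmetric difference $\{e_1,f_1,f_2\}\triangle\{e_1,e_2\}$ of two cocycles. It then handles minimality uniformly through the swap $e_1\leftrightarrow e_2$ on subsets, which is exactly that transitivity argument in parity language. The price is setting up fact (a) and the dictionary (b) between minimal disconnecting sets and minimal nonempty cocycles. You do both correctly.

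One small point: the cycle property of $e_1\sim e_2$ that you cite is stated in the paper only for bridgeless graphs, while this lemma is for arbitrary $G$. This is harmless. Fact (b) gives $\{e_1,e_2\}=\delta(Y)$ for some $Y$, and the easy direction of (a) then shows that every cycle meets it in $0$ or $2$ edges.
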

\begin{proof}
    Let $f_1 = u_1v_1$, $f_2 = u_2v_2$, $e_1 = x_1y_1$ and $e_2 = x_2y_2$. Let $A$ and $B$ be the components of $G - \{e_1, f_1, f_2\}$ where $e_2 \in B$. Suppose that 
    $\{e_2, f_1, f_2\}$ is not a 3-edge-cut. Then, there exists a $x_2$--$y_2$ path $P$ in 
    $G - \{e_2, f_1, f_2\}$. Since only $e_1, f_1, f_2$ are the edges between $A$ and $B$ and both ends of $e_2$ are in $B$,
    the path $P$ cannot use $e_1$. Therefore, $P$ is also a $x_2$--$y_2$ path in 
    $G - \{e_1, e_2\}$. This is a contradiction to $\{e_1, e_2\}$ being a 2-edge-cut.
\end{proof}

The following theorem is the main tool to reduce the problem to that for 3-edge-connected graphs.
The proof is somewhat tedious, so we provide it in the appendix. 

\begin{restatable}{theorem}{thmgluing}\label{thm:2cut-gluing}
Let $G$ be a bridgeless cubic graph and let $\{e_1,e_2\}$ be a $2$-edge-cut of $G$.
Let $G_1$ and $G_2$ be the graphs obtained from the corresponding $2$-cut reduction,
and let $e'_1\in E(G_1)$ and $e'_2\in E(G_2)$ be the new edges created by the reduction.
Let $M_1$ and $M_2$ be well-spread perfect matchings of $G_1$ and $G_2$ that agree on
$e'_1$ and $e'_2$. Then $G$ has a well-spread perfect matching~$M$, that we call
a \emph{gluing} of $M_1$ and $M_2$. 
Moreover, $M$ can be obtained efficiently: 
it equals $M_1 \cup M_2$ or $(M_1\setminus\{e'_1\}) \cup (M_2\setminus\{e'_2\}) \cup \{e_1,e_2\}$
depending on whether $e'_1 \in M_1$. 
\end{restatable}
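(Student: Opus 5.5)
We first check that the candidate $M$ is a perfect matching of $G$, and then show it meets every $3$-edge-cut of $G$ in exactly one edge. Write $e_1=u_1v_1$ and $e_2=u_2v_2$. Let $H_1,H_2$ be the components of $G-\{e_1,e_2\}$, with $G_i=H_i+e'_i$, so $e'_1=u_1u_2$ and $e'_2=v_1v_2$. The hypothesis says $e'_1\in M_1$ iff $e'_2\in M_2$.

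\textbf{Perfect matching.} If $e'_1\notin M_1$, then $M=M_1\cup M_2$ consists of edges of $H_1\cup H_2\subseteq G$ and covers every vertex exactly once. If $e'_1\in M_1$, then removing $e'_1,e'_2$ exposes exactly $u_1,u_2,v_1,v_2$. These are covered by $e_1,e_2$, so $M$ is again a perfect matching of $G$.

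\textbf{Classifying $3$-edge-cuts.} Let $C$ be a $3$-edge-cut of $G$ with sides $(X,\bar X)$; both sides induce connected subgraphs. Since $\{e_1,e_2\}$ is a cut, the cut $C$ contains both, one, or neither of $e_1,e_2$. It cannot contain both: the third edge $f$ would then give $\{e_1,e_2,f\}$, which is not minimal because $\{e_1,e_2\}$ is already a cut. We then split into cases.

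\emph{Case 1: $C$ contains exactly one of them, say $e_1$.} By the previous lemma, $C'=(C\setminus\{e_1\})\cup\{e_2\}$ is also a $3$-edge-cut. Since $|M\cap\{e_1,e_2\}|\in\{0,2\}$ (as in Lemma~\ref{lem:two_edge_cut_matching}), $C$ and $C'$ together force the two remaining edges $f_1,f_2$ to satisfy $|M\cap\{f_1,f_2\}|=1-|M\cap\{e_1\}|$. It remains to locate $f_1,f_2$. A parity and connectivity argument shows they lie on the same side, say in $H_1$: if $f_1\in H_1$ and $f_2\in H_2$, then $C$ would separate each $H_i$ using one edge plus $e_1$, which forces a bridge or a non-minimal cut. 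Then $\{e'_1,f_1,f_2\}$ should be a $3$-edge-cut of $G_1$: replacing the detour through $H_2$ by $e'_1$ preserves separation, and minimality follows because each edge of $C$ is needed. Well-spreadness of $M_1$ then gives $|M_1\cap\{e'_1,f_1,f_2\}|=1$, and $e'_1\in M_1$ iff $e_1\in M$, which is exactly the required identity.

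\emph{Case 2: $C$ avoids both $e_1,e_2$.} Then some $H_i$, say $H_2$, lies entirely on one side. Otherwise $C$ would meet both $H_1$ and $H_2$, and cutting each $H_i$ needs at least $2$ edges of $C$ (each $H_i+e'_i$ is bridgeless), giving $4>3$. Hence $C\subseteq E(H_1)$, and $C$ is a $3$-edge-cut of $G_1$: the endpoints $u_1,u_2$ of $e'_1$ lie on the same side as $H_2$, so contracting $H_2$ to $e'_1$ keeps both sides connected. Moreover $M\cap C=M_1\cap C$, so $|M\cap C|=1$.

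The main obstacle is the careful verification of minimality and connectivity of the transferred cuts, especially excluding the mixed configuration in Case 1. I would handle it with explicit side-sets $X\cap V(H_i)$ and counting edges, as above.
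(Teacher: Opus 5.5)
\textbf{Gap in Case 2.} The step ``cutting each $H_i$ needs at least $2$ edges of $C$ (each $H_i+e'_i$ is bridgeless), giving $4>3$'' is invalid. Bridgelessness of $G_i=H_i+e'_i$ only says that the cut of $G_i$ induced by $X\cap V(H_i)$ has at least two edges. One of those edges may be the new edge $e'_i$, which is not in $C$. So a $3$-edge-cut avoiding $e_1,e_2$ can have two edges $f_1,f_2$ in $H_1$ and one edge $f_3$ in $H_2$, with $v_1,v_2$ (and hence $u_1,u_2$) on opposite sides.

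This configuration really occurs. Take $H_1=K_4-u_1u_2$ on $\{u_1,u_2,w_1,w_2\}$. Let $H_2$ consist of $K_4-ab$ on $\{a,b,p,q\}$, $K_4-cd$ on $\{c,d,r,s\}$, and the edge $bd$. Put $e_1=u_1a$ and $e_2=u_2c$. Then $X=\{u_1,a,b,p,q\}$ has both sides connected, and $\delta_G(X)=\{u_1w_1,u_1w_2,bd\}$. This cut avoids $e_1,e_2$, yet both $H_1$ and $H_2$ meet both sides. Such a $C$ is an edge set of neither $G_1$ nor $G_2$, so your conclusion ``$C\subseteq E(H_1)$ and $C$ is a $3$-edge-cut of $G_1$'' fails, and well-spreadness of a single $M_i$ cannot be applied to it.

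\textbf{What is missing} is a separate argument for this mixed case; it is the paper's Case~4.
\begin{itemize}
\item First, $v_1$ and $v_2$ must lie on opposite sides. Otherwise $e'_2$ would not cross, and $f_3$ would be a bridge of $G_2$.
\item Hence $\{f_3,e'_2\}$ is a $2$-edge-cut of $G_2$, so $f_3\in M_2\iff e'_2\in M_2$ by Lemma~\ref{lem:two_edge_cut_matching}.
\item Since $e_1,e_2\notin C$, the vertices $u_1,u_2$ are also on opposite sides. So $\{f_1,f_2,e'_1\}$ is a $3$-edge-cut of $G_1$, and $|M_1\cap\{f_1,f_2,e'_1\}|=1$.
\item If $e'_1\notin M_1$, then $e'_2\notin M_2$. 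Exactly one of $f_1,f_2$ lies in $M$, and $f_3\notin M$.
\item If $e'_1\in M_1$, then $f_1,f_2\notin M$, while $e'_2\in M_2$ forces $f_3\in M_2$ and hence $f_3\in M$.
\end{itemize}
In both cases $|M\cap C|=1$. This is exactly where the agreement hypothesis couples $M_1$ and $M_2$ through an edge $f_3$ away from the reduction.

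The rest of your proposal is fine and matches the paper's Cases~1--3:
\begin{itemize}
\item the perfect-matching check;
\item excluding $\{e_1,e_2\}\subseteq C$;
\item Case~1, including the bridge argument that rules out $f_1\in H_1$, $f_2\in H_2$;
\item Case~2 when one $H_i$ lies entirely on one side.
\end{itemize}
In Case~1, the remark that $C$ and $C'$ ``force'' $|M\cap\{f_1,f_2\}|=1-|M\cap\{e_1\}|$ restates the goal rather than deriving it. The actual proof is your transfer to the cut $\{e'_1,f_1,f_2\}$ of $G_1$, so the detour through $C'$ can be dropped.
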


\section{Cactus representation of 2-cuts}

Recall that for a connected graph $G$ and a nonempty proper subset $X \subset V(G)$, the set
$
\delta_G(X)=\{uv\in E(G)\mid u\in X,\ v\in V(G)\setminus X\}
$
is called an \emph{edge cut} of $G$.
A \emph{minimum cut} of $G$ is an edge cut of minimum size.
The \emph{edge-connectivity} $\lambda(G)$ is the size of a minimum cut, and the \emph{local edge-connectivity} $\lambda(u, v)$ of two vertices $u$, $v$ is the size of a smallest edge cut disconnecting $u$ from $v$. An inclusion-wise maximal subset $X\subset V(G)$ such that $\lambda(u, v)\geq k$ for all $u, v \in X$ is called a \emph{$k$-edge-connected component} of $G$. 

Let $C(G)$ denote the set of all minimum cuts of $G$. A \emph{cactus} is a connected graph in which each edge is contained in at most one cycle. We allow 2-cycles but not loops.

For cacti in which every cycle has length at least $3$, it is known~\cite[P.160]{west2001introduction} that the number of edges is at most $\left\lfloor \frac{3(n-1)}{2} \right\rfloor$. However, this bound does not apply in our setting, since we allow $2$-cycles. Allowing $2$-cycles increases the maximum possible number of edges: the extremal example is obtained by taking $n-1$ copies of a $2$-cycle sharing a single common vertex. We therefore use the following observation.

\begin{observation}\label{CactusSize}
Let $T$ be a cactus with $n$ vertices, where $2$-cycles are allowed but loops are not. Then $|E(T)| \le 2(n-1)$.
\end{observation}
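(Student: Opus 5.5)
We want to show that a cactus $T$ on $n$ vertices, where 2-cycles are allowed but loops are not, has $|E(T)|\le 2(n-1)$. The plan is to use the cycle-space dimension together with the fact that in a cactus distinct cycles are edge-disjoint.

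\textbf{Step 1: the cycle rank.} Let $c$ be the number of cycles of $T$, counting 2-cycles. Since each edge lies in at most one cycle, the cycles are pairwise edge-disjoint. The cycles of $T$ are then exactly its blocks that are not single edges: each such block is 2-edge-connected, and if it were not a single cycle it would contain an edge lying on two cycles. It follows that $T$ is a tree of blocks, each block being either a bridge or a cycle.

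\textbf{Step 2: reduce to counting cycles.} Deleting one edge from each cycle destroys every cycle and keeps $T$ connected, so it leaves a spanning tree. Hence
\[
|E(T)| = (n-1) + c .
\]
It therefore suffices to prove $c\le n-1$.

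\textbf{Step 3: bound the number of cycles.} Contract each cycle to a single vertex, or equivalently look at the block tree. A cleaner route is induction on the number of blocks.
\begin{itemize}
\item A leaf block $B$ of the block tree shares exactly one vertex with the rest of $T$.
\item Removing $B$ except for that shared vertex deletes $|V(B)|-1\ge 1$ vertices and at most one cycle, because no loops are allowed and so every block has at least 2 vertices.
\item The remaining graph is again a cactus, so by induction it has at most $(n-|V(B)|+1)-1$ cycles.
\item Adding back $B$ gives $c\le (n-|V(B)|) + 1 \le n-1$.
\end{itemize}
The base case is a single vertex, with $c=0=n-1$.

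\textbf{Conclusion.} Combining Steps 2 and 3 gives $|E(T)| = (n-1)+c \le 2(n-1)$. Equality holds when every block is a 2-cycle, as in the bouquet example. The only point needing care is the structural claim in Step 1, that the blocks are bridges or cycles. This follows directly from the cactus definition, since any 2-edge-connected block other than a cycle contains an ear, and hence an edge lying on two distinct cycles.
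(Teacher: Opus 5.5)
Your proof is correct. The paper gives no proof of this observation; it only states it and cites the bouquet of $n-1$ two-cycles as the extremal example, so there is no argument to compare against. Your identity $|E(T)|=(n-1)+c$ is the natural route. It holds because the cycles are pairwise edge-disjoint, so deleting one edge from each cycle keeps $T$ connected and destroys every cycle.

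Two small remarks:
\begin{itemize}
\item Step~3 can be bypassed, together with the block-structure claim of Step~1 that it relies on. Every cycle $C$ has length at least $2$ because there are no loops, so its other $|E(C)|-1\ge 1$ edges survive in the spanning tree from Step~2. By edge-disjointness, distinct cycles leave distinct surviving edges, which gives $c\le n-1$ at once. The same count with all cycle lengths at least $3$ gives $2c\le n-1$, recovering the $\lfloor 3(n-1)/2\rfloor$ bound the paper quotes.
\item If you keep the induction, treat the case where $T$ is a single block explicitly. Then $B$ shares no cut vertex with the rest, so keep an arbitrary vertex of $B$ instead; the computation is unchanged.
\end{itemize}
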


\begin{definition}
Let $G$ be a graph and $T$ be a cactus. A pair $(T, \varphi)$ with $\varphi: V(G) \to V(T)$ is called a cactus representation (or cactus model) for the graph $G$ if it meets the following criteria:
    \begin{enumerate}
        \item For every minimum cut $\{S, V(T) \backslash S\} \in C(T)$, the cut $\{X, \bar{X}\}$ in $G$ defined by
        $X = \{ u \in V(G) \mid \varphi(u) \in S \}$, and
        $\bar{X} = V(G) \backslash X$
        is a minimum cut in $G$.
        \item Conversely, for every minimum cut $\{X, \bar{X}\}$ in $G$, there exists a minimum cut\\
        $\{S, V(T) \backslash S\} \in C(T)$ such that
        $X = \{ u \in V(G) \mid \varphi(u) \in S \}$,
        and $\bar{X} = V(G) \backslash X$.
    \end{enumerate}
\end{definition}

Let $G$ be a bridgeless graph with at least one $2$-edge-cut.
After contracting each $3$-edge-connected component of $G$ to a single node, one obtains a cactus $T$
representing all 2-edge-cuts of $G$~\cite[Section 2.3]{Nagamochi}.
In this case, the nodes of $T$ correspond to the $3$-edge-connected components of $G$,
and no empty nodes are needed~\cite[Section 2.3.5]{Nagamochi}.
Note that the sets $\{\varphi^{-1}(x):x\in V(T)\}$ form a partition of $V(G)$,
but a node $x\in V(T)$ need not correspond to an induced subgraph
$G[\varphi^{-1}(x)]$ that is itself $3$-edge-connected.
Moreover, for two edges joining distinct contracted components, the pair forms a
$2$-edge-cut in $G$ if and only if the corresponding edges of $T$ lie on a common cycle~\cite{Nagamochi} (see Figure~\ref{fig:CactusExample}).

\begin{figure}
    \begin{center}
        \includegraphics[scale=0.15]{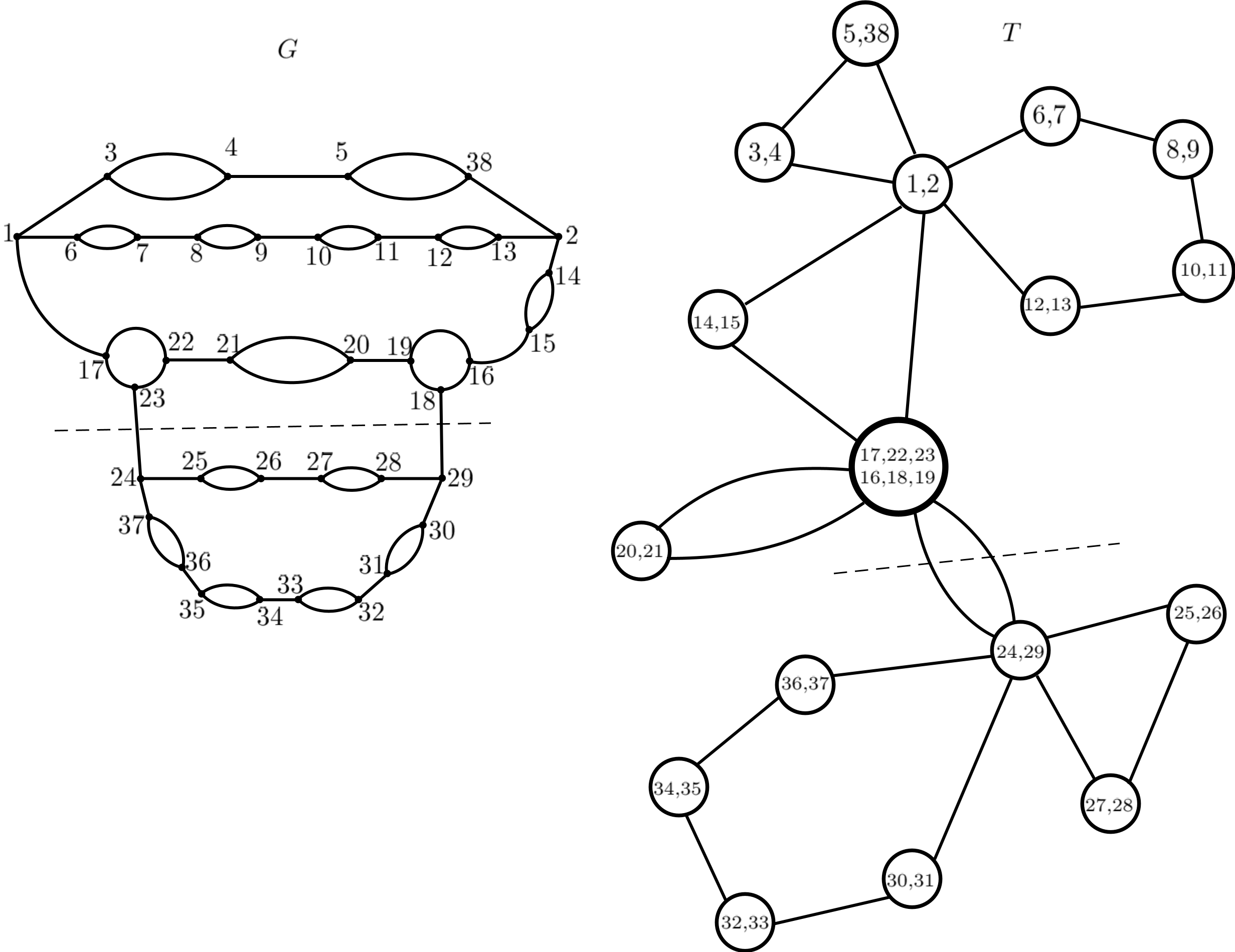}
    \end{center}
    \caption{A cubic graph $G$ and its cactus representation of the 2-edge-cuts.
    Dashed lines represent a 2-edge-cut in $G$ and its representation in $T$.
    }
    \label{fig:CactusExample}
\end{figure}

For a mapping $\varphi:V(G)\to V(T)$, an edge $e=uv\in E(G)$ is called
\emph{external} if $\varphi(u)\neq \varphi(v)$.

\if
For every external edge $e=uv$, we denote by $\psi(e)$ the edge of $T$ joining
$\varphi(u)$ and $\varphi(v)$.
Thus the map $\psi$ is only auxiliary notation that records which cactus edge
corresponds to a given external edge of $G$.
In particular, for any two distinct external edges $e,f\in E(G)$,
$\{e,f\}$ is a $2$-edge-cut of $G$ if and only if
$\psi(e)$ and $\psi(f)$ lie on a common cycle of $T$~\cite{Nagamochi}.
Figure~\ref{fig:CactusExample} illustrates this construction.
\fi

\begin{lemma}\label{lem:cactus-reduction}
Let $G$ be a bridgeless cubic graph with at least one $2$-edge-cut and let $(T,\varphi)$ be a cactus representation of all $2$-edge-cuts of $G$. Let $\{e_1,e_2\}$ be a $2$-edge-cut of $G$, and let $\{a,b\}$ be the corresponding $2$-edge-cut in $T$. Perform the $\{e_1,e_2\}$-reduction in $G$, obtaining graphs $G_1$ and $G_2$ with new edges $e'_1$ and $e'_2$. Perform the $\{a,b\}$-reduction in $T$, obtaining graphs $T_1$ and $T_2$. If a loop is created in $T_1$ or $T_2$, delete it. Then 
$(T_1,\varphi_1 = \varphi|_{V(G_1)})$ and $(T_2,\varphi_2 = \varphi|_{V(G_2)})$
are cactus representations of all $2$-edge-cuts of $G_1$ and $G_2$, respectively.
\end{lemma}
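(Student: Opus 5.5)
We compare the 2-edge-cuts of each $G_i$ with the 2-edge-cuts of the corresponding $T_i$, using that $(T,\varphi)$ encodes them via shared cycles and that the nodes of $T$ are the contracted $3$-edge-connected components of $G$. Since $\{e_1,e_2\}$ is a 2-edge-cut, $e_1,e_2$ are external and correspond to edges $a,b$ on a common cycle $Z$ of $T$. Removing $a,b$ splits $Z$ into two arcs and $T$ into two pieces $T_1^0,T_2^0$. The $\{a,b\}$-reduction closes each arc by a new edge $a'$ (resp.\ $b'$). This yields a shorter cycle, a 2-cycle, or a loop when the arc is a single node; the loop is deleted. Because $\varphi$ maps $V(H_i)$ exactly onto $V(T_i^0)$, the maps $\varphi_i$ are well defined and surjective. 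Each $T_i$ is again a cactus, since every other cycle of $T$ lies wholly in one piece and the new cycle uses only fresh edges plus edges of $Z$.

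\textbf{Step 1: 3-edge-connected components are preserved.} For $x\in V(T_1)$ we show $\varphi_1^{-1}(x)$ is a $3$-edge-connected component of $G_1$.
\begin{itemize}
\item Two vertices $u,v\in V(H_1)$ with $\lambda_G(u,v)\ge 3$ keep $\lambda_{G_1}(u,v)\ge 3$. Take three edge-disjoint $u$–$v$ paths in $G$. At most one of them crosses into $H_2$, since it must use both $e_1$ and $e_2$. In that path we replace the $H_2$-excursion by $e'_1$, which keeps the three paths edge-disjoint.
\item Conversely, a $\le2$-cut separating $u,v$ in $G_1$ lifts to one in $G$: replace $e'_1$ by $e_1$ or $e_2$, and put all of $H_2$ on the appropriate side.
\end{itemize}

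\textbf{Step 2: 2-edge-cuts of $G_1$ versus cycles of $T_1$.} Edges of $G_1$ other than $e'_1$ are edges of $G$. The edge $e'_1$ is external exactly when $a'$ is not a deleted loop.
\begin{itemize}
\item For $f,g\ne e'_1$: $\{f,g\}$ is a 2-cut of $G_1$ iff it is one of $G$. The forward direction lifts as in Step 1. For the converse, $f,g\in E(H_1)$ and both sides of the cut in $G$ meet $H_1$ consistently with $e_1,e_2$. Equivalently, $\psi(f),\psi(g)$ lie on a common cycle of $T$, and that cycle is a cycle of $T_1$ (possibly $Z$ shortened via $a'$).
\item For pairs $\{f,e'_1\}$, a 2-cut in $G_1$ forces $f\sim e_1$ in $G$; this is essentially the mechanism of Lemma~\ref{lem:2-cut-reduction} run in reverse. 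So $\psi(f)\in Z$, and $\psi(f)$, $a'$ lie on the shortened cycle. The converse is Lemma~\ref{lem:2-cut-reduction} applied with $e_1,e_2,f$.
\end{itemize}
Combining with Nagamochi's characterization (nodes are $3$-edge-connected components and 2-cuts are pairs on a common cycle) gives both conditions of the cactus representation, since the minimum cuts of a cactus are exactly pairs of edges on a common cycle.

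\textbf{Main obstacle.} The delicate part is the degenerate case where an arc of $Z$ is a single node $x$, so $a'$ becomes a loop and $e'_1$ is internal to $\varphi_1^{-1}(x)$. We must then check that $G_1$ has no 2-cut separating the endpoints of $e'_1$, i.e.\ that $u_1,u_2$ are $3$-edge-connected in $G_1$. This follows because $u_1,u_2$ lie in the same $3$-edge-connected component of $G$ and Step 1 preserves $\lambda\ge3$.

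A second degenerate case is when $G_1$ has no 2-cut at all. Then $T_1$ is a single node, which is a valid trivial representation.
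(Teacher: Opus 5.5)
Your proposal is correct and is essentially the paper's proof. Your Step~2 is the paper's split into pairs avoiding $e'_1$ (equivalence with $2$-edge-cuts of $G$ via rerouting through $H_2$) and pairs $\{e'_1,f\}$ (Lemma~\ref{lem:2-cut-reduction} in one direction, a lifting argument in the other). Your Step~1 makes explicit what the paper leaves implicit: internal edges of $G_1$, in particular $e'_1$ when $a'$ is a deleted loop, lie in no $2$-edge-cut.

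One small slip: your ``Conversely'' bullet in Step~1 is the contrapositive of the first bullet, not its converse, so it does not show that $\varphi_1^{-1}(x)$ is maximal. This is harmless. Only the preservation of $\lambda\ge 3$ is needed to verify the two defining conditions. Maximality then follows from condition~1 for $T_1$, since distinct nodes of $T_1$ are separated by two edges on a common cycle.
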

\begin{proof}
We prove the statement for $G_1$; the proof for $G_2$ is symmetric.

Let $H_1,H_2$ be the components of $G-\{e_1,e_2\}$, so that
$G_1=H_1+e'_1$ and $G_2=H_2+e'_2$, where $e'_1$ and $e'_2$ are the new edges.
Let $A_1,A_2$ be the components of $T-\{a,b\}$, and let $a'$ and $b'$ be the new edges added to $A_1$ and $A_2$, respectively, by the $\{a,b\}$-reduction. Delete $a'$ or $b'$ if it is a loop.

By Lemma~\ref{lem:2cut-reduction-bridgeless}, $G_1$ is bridgeless cubic, so its minimum cuts are
exactly its $2$-edge-cuts. Thus it suffices to show that for any external edges of $G_1$, the pair
forms a $2$-edge-cut in $G_1$ if and only if the corresponding edges of $T_1$ lie on a common cycle.

\medskip
\noindent
\textbf{Case 1:} $f,g\in E(G_1)\setminus\{e'_1\}$.
Then $f,g\subseteq E(H_1)\subseteq E(G)$. We claim that
$\{f,g\}$ is a $2$-edge-cut in $G_1$ if and only if $\{f,g\}$ is a $2$-edge-cut in $G$.
Indeed, any path in $G-\{f,g\}$ that traverses $H_2$ must enter through one of $e_1,e_2$
and leave through the other, so each maximal subpath through $H_2$ can be replaced by the
single edge $e'_1$, yielding a path in $G_1-\{f,g\}$. Conversely, any use of $e'_1$ in a path
of $G_1-\{f,g\}$ can be expanded to a path through $e_1$, then inside $H_2$, and then through
$e_2$, yielding a path in $G-\{f,g\}$. Hence the claim follows.

Since $(T,\varphi)$ is a cactus representation of all $2$-edge-cuts of $G$, the latter is equivalent
to the corresponding edges of $T$ lying on a common cycle. As neither corresponding edge is $a$ or $b$,
this is equivalent to the corresponding edges in $T_1$ lying on a common cycle. Therefore
$\{f,g\}$ is a $2$-edge-cut in $G_1$ if and only if the corresponding edges lie on a common cycle in $T_1$.

\medskip
\noindent
\textbf{Case 2:} one edge is $e'_1$, say $\{e'_1,f\}$.
Let $c$ be the edge of $T_1$ corresponding to $f$.
Assume first that $c$ and $a'$ lie on a common cycle in $T_1$. Then the edge of $T$
corresponding to $f$ lies on the cycle containing $a$ and $b$. Since $(T,\varphi)$ represents
all $2$-edge-cuts of $G$, both $\{e_1,f\}$ and $\{e_2,f\}$ are $2$-edge-cuts in $G$. By
Lemma~\ref{lem:2-cut-reduction}, after performing the $\{e_1,e_2\}$-reduction, it follows that
$\{e'_1,f\}$ is a $2$-edge-cut in $G_1$.

Conversely, assume that $\{e'_1,f\}$ is a $2$-edge-cut in $G_1$. We show that $\{e_1,f\}$ is
a $2$-edge-cut in $G$. Suppose not. Then there exists in $G-\{e_1,f\}$ a path joining the two
ends of $e_1$. Since $\{e_1,e_2\}$ is a $2$-edge-cut, every such path must use $e_2$, and its
subpath inside $H_1$ yields a path in $G_1-\{e'_1,f\}$ joining the ends of $e'_1$, a contradiction.
Hence $\{e_1,f\}$ is a $2$-edge-cut in $G$. Since $\{e_1,e_2\}$ is also a $2$-edge-cut, the
corresponding edge of $T$ lies on the same cycle as $a$ and $b$, and therefore $c$ lies on a
common cycle with $a'$ in $T_1$.

Thus also in this case $\{e'_1,f\}$ is a $2$-edge-cut in $G_1$ if and only if the corresponding
edges lie on a common cycle in $T_1$.

Finally, if $a'$ is a loop, we delete it. This does not affect the set of represented
$2$-edge-cuts, since a loop belongs to no cycle of the cactus. Therefore
$(T_1,\varphi_1)$ is a cactus representation of all $2$-edge-cuts of $G_1$. The proof for
$(T_2,\varphi_2)$ is identical.
\end{proof}

\begin{figure}
    \begin{center}
        \includegraphics[scale=0.14]{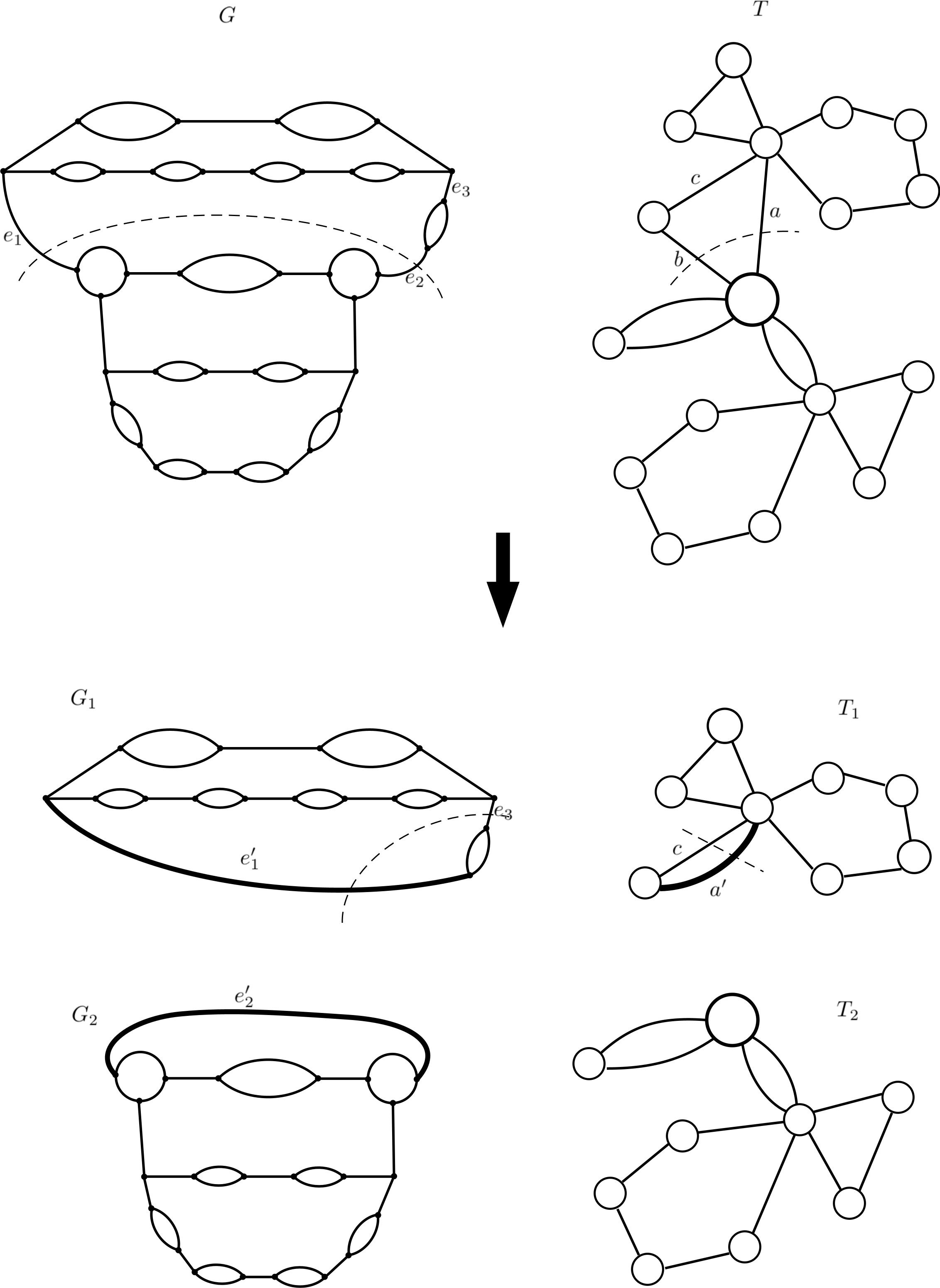}
    \end{center}
    \caption{Illustration of the $\{e_1,e_2\}$-reduction from Lemma~\ref{lem:cactus-reduction} and the corresponding update of the cactus representation. The edges $e_1$ and $e_2$ form a $2$-edge-cut in $G$, corresponding to the cactus edges $a$ and $b$ in $T$. After the reduction, the graph splits into $G_1$ and $G_2$ with new edges $e'_1$ and $e'_2$, and the cactus is updated to $T_1$ and $T_2$ with new edges $a'$ and $b'$. The edge $b'$ becomes a loop in $T_2$ and is therefore deleted. Edges such as $c$ illustrate that the remaining structure is preserved.}
    \label{fig:CactusReductionLemma}
\end{figure}

\begin{theorem}\label{thm:edge_forced_wspm}
Let $G$ be a $3$-edge-connected cubic graph and let $e\in E(G)$.
Then there exists a well-spread perfect matching $M$ such that $e\in M$.
Moreover, such a matching can be computed in time $O(n\log^4 n)$.
\end{theorem}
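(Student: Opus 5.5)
Existence follows from the Kaiser--Kr\'al'--Norine argument via the perfect matching polytope; the algorithmic part will reuse the $O(n\log^4 n)$ algorithm of \cite{GS25} for 3-edge-connected graphs, adapted so that a prescribed edge is forced into the matching.

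For existence, consider the vector $x\equiv 1/3$ on $E(G)$. Since $G$ is $3$-edge-connected and cubic, $x$ lies in the perfect matching polytope: every odd cut $\delta(X)$ has size at least $3$, so $x(\delta(X))\ge 1$. Moreover $x(\delta(X))=1$ exactly for the $3$-edge-cuts. Hence every perfect matching $M$ in the support of a convex combination representing $x$ satisfies $|M\cap C|=1$ for each $3$-edge-cut $C$, because the constraint $x(C)\ge 1$ is tight for the combination. So every such $M$ is a WSPM. Since $x_e=1/3>0$, at least one of these matchings contains $e$, which proves existence.

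For the algorithm I will use the \cite{GS25} scheme directly, which already supports prescribing an edge. That scheme repeatedly contracts a nontrivial $3$-edge-cut $\delta(X)$, producing $G/X$ and $G/\bar X$; both are cubic and $3$-edge-connected. It then finds WSPMs of the two sides that agree on the three cut edges and glues them. When both sides are cyclically $4$-edge-connected, it finds a perfect matching by Diks--Sta\'nczyk.

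To force $e$, I proceed top-down.
\begin{enumerate}
\item Recurse with $e$ prescribed into the piece that contains $e$.
\item That piece's matching uses exactly one edge $f$ of the cut, since it is well-spread and the cut is a $3$-edge-cut of the piece.
\item Recurse on the other piece with $f$ prescribed, so the two matchings agree on the cut.
\end{enumerate}
In the base case, a cyclically $4$-edge-connected cubic graph, the only $3$-edge-cuts are trivial, so any perfect matching containing $e$ is a WSPM. Such a matching exists: it is a perfect matching of $G-u-v$ with $e=uv$, which exists because $x$ above puts weight $1/3$ on $e$. It can be found by deleting $u,v$ and running a maximum matching algorithm, or, more cheaply, by using the near-linear bridgeless matching algorithm of Diks--Sta\'nczyk adapted to forced edges. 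The standard trick is to replace $e$ by a gadget whose every perfect matching corresponds to $e\in M$; for example, subdivide the two edges at $u$ other than $e$ with a structure forcing them to be unmatched. Gluing preserves well-spreadness by the same case analysis as in \cite{GS25}, because every $3$-edge-cut of $G$ either equals the contracted cut or lives in one of the two pieces.

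The main obstacle is the running time. In \cite{GS25} the decomposition order comes from the tree (cactus-like) representation of $3$-edge-cuts, updated dynamically in $O(n\log^4 n)$ total. I need to check that adding the prescribed edge does not force a different order of processing. My plan is to first build the whole decomposition tree of $3$-edge-cuts as in \cite{GS25}, then do a second, linear-time pass rooted at the node containing $e$ that propagates prescriptions downward. The base-case forced matchings must then total near-linear time. Diks--Sta\'nczyk with a constant-size forcing gadget keeps each piece at $O(n_i\log^2 n_i)$, and the pieces have total size $O(n)$.
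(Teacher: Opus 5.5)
\paragraph{Verdict.} Your existence proof via the perfect matching polytope is correct. It is a different, non-constructive route; the paper obtains existence only as a by-product of its algorithm. Your algorithmic skeleton is essentially the paper's: decompose along the laminar nontrivial $3$-edge-cuts as in \cite{GS25}, root the decomposition at the piece containing $e$, and push prescribed cut edges downward. The paper roots the tree representation at the neighbour $r$ of the leaf $\varphi(u)$, so that $e$ survives into the root graph $G'$. The running-time claim, however, has a genuine gap.

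\paragraph{The gap.} Everything rests on one primitive: finding, in near-linear time, a perfect matching of a bridgeless cubic piece that contains a prescribed edge. You need it for every piece, not just the root, since each non-root piece must contain the cut edge $f$ handed down to it. Your first option, a maximum matching in $G-u-v$, costs $O(n^{3/2})$ on a large piece and breaks the bound.

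\paragraph{Why the gadget cannot work.} Your second option is a forcing gadget fed to Diks--Sta\'nczyk. That algorithm needs a bridgeless cubic input, and in a bridgeless cubic graph every edge lies in some perfect matching. Taking a neighbouring edge, every edge is also avoided by some perfect matching. So no structure can force an edge to be unmatched in all perfect matchings. Concretely, the edge joining $w_1$ to whatever replaces $uw_1$ lies in some perfect matching of the modified graph. That matching leaves $w_1$ to be covered by $uw_1$ in $G$, so it does not correspond to a perfect matching of $G$ containing $e$. If the gadget instead creates a bridge or breaks cubicity, the algorithm no longer applies. Prescription therefore cannot be reduced to the unprescribed problem this way.

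\paragraph{The fix.} The paper uses the algorithmic form of Sch\"onberger's theorem instead. This is an $O(n\log^4 n)$ procedure that directly returns a perfect matching of a bridgeless cubic graph through any given edge (Theorem~1 of \cite{GS25}), applied to $G'$ with the edge $e$. The same primitive serves the non-root pieces. With it substituted for your gadget step, your plan goes through within $O(n\log^4 n)$.
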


\begin{proof}
Let $e=uv\in E(G)$. We use Algorithm~1 from~\cite{GS25}.
Let $(T,\varphi)$ be the tree representation of the nontrivial 3-edge-cuts. 
Since $G$ is $3$-edge-connected and cubic, every vertex of $G$ is mapped to a
leaf of $T$. In particular, both $\varphi(u)$ and $\varphi(v)$ are leaves of
$T$. Let $r$ be the unique neighbour of the leaf $\varphi(u)$, and root $T$
at $r$.

Consider the decomposition phase of Algorithm~1 of \cite{GS25}.
At each step, the algorithm separates a graph corresponding to a non-root
subtree of $T$. Since $\varphi(u)$ is a leaf whose parent is the root $r$, no
separated subtree contains $\varphi(u)$ (the subtree consisting solely of $\varphi(u)$ 
is processed only at the final step). Hence no separated graph contains the
vertex $u$. Therefore, no separated graph can contain the whole edge $e=uv$, because that
would require it to contain both endpoints of $e$.
Consequently, the edge $e$ remains present in the current graph throughout the
decomposition phase. In particular, it is present in the final graph $G'$
associated with the root $r$.
At the end of the decomposition phase, the remaining graph $G'$ (the graph
associated with the root $r$) has no non-trivial $3$-edge-cuts, and hence
is internally $4$-edge-connected (see~\cite{GS25}).
In particular, $G'$ is a bridgeless cubic graph.
By Theorem~1 of~\cite{GS25} (Schönberger's theorem),
there exists a perfect matching $M'$ of $G'$ that contains the edge $e$,
and such a matching can be found in time $O(n\log^4 n)$.

We now perform the assembly phase of Algorithm~1 of~\cite{GS25}.
At each step, we combine the current matching with a perfect matching
of a previously separated graph using Theorem~2
of~\cite{GS25}. Since all matchings agree on the corresponding
$3$-edge-cuts, the resulting matching remains perfect and well-spread.
Since $e$ was present in $G'$ and was chosen to belong to $M'$,
and the assembly procedure never removes edges from the current matching,
it follows that $e\in M$, where $M$ is the final matching obtained for $G$.
The running time is the same as in Algorithm~1 of
\cite{GS25}, which is $O(n\log^4 n)$.
Thus, there exists a well-spread perfect matching of $G$ containing $e$,
and it can be found in time $O(n\log^4 n)$.
\end{proof}

\begin{corollary}\label{thm:edge_forced_wspm_not_in}
Let $G$ be a $3$-edge-connected cubic graph and let $e\in E(G)$.
Then there exists a well-spread perfect matching $M$ such that $e\notin M$.
Moreover, such a matching can be computed in time $O(n\log^4 n)$.
\end{corollary}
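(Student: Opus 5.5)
We derive the corollary from Theorem~\ref{thm:edge_forced_wspm} by forcing an adjacent edge into the matching. Let $e=uv$. Since $G$ is cubic and loopless, $u$ has another incident edge $f\neq e$; if $e$ has a parallel copy, we take $f$ to be that copy. By Theorem~\ref{thm:edge_forced_wspm}, applied to $f$, we compute in time $O(n\log^4 n)$ a well-spread perfect matching $M$ of $G$ with $f\in M$. A perfect matching covers $u$ by exactly one edge. Because $f\in M$ is incident to $u$ and $f\neq e$, we get $e\notin M$. In the parallel-edge case the same argument applies: $f$ and $e$ share both endpoints, so they cannot both lie in $M$.

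The only point to check is that $f$ exists and is distinct from $e$. This holds because $u$ has degree $3$ and there are no loops, so $u$ has at least two incident edges counted with multiplicity. Choosing $f$ takes constant time, so the total running time is that of Theorem~\ref{thm:edge_forced_wspm}, namely $O(n\log^4 n)$.

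There is no real obstacle here: well-spreadness is inherited directly from the matching produced by Theorem~\ref{thm:edge_forced_wspm}, since we output that matching unchanged.
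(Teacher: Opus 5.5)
Your proposal is correct and follows the paper's proof: pick a second edge at an endpoint $u$ of $e$, force it into a well-spread perfect matching via Theorem~\ref{thm:edge_forced_wspm}, and conclude $e\notin M$ because $u$ is covered exactly once, with the same $O(n\log^4 n)$ running time. Your separate handling of parallel edges is harmless but not needed, since any edge $f\neq e$ at $u$ works.
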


\begin{proof}
Let $e=uv_1$ and $e' = uv_2$ be two edges in $E(G)$ incident at $u$. By Theorem~\ref{thm:edge_forced_wspm} there exists a well-spread perfect matching $M$ such that $e'\in M$. Since $u$ is already covered by $M$, $e\notin M$. 
\end{proof}

\section{Algorithm}
At each step we choose a vertex of degree $2$ in the cactus. 
(It exists, unless there is no 2-edge-cut left, as follows easily by Observation~\ref{CactusSize}.) 
The two cactus edges incident with this vertex correspond
to a $2$-edge-cut in the current graph. Performing the corresponding $2$-cut
reduction separates a $3$-edge-connected graph from the remainder.
Repeating this procedure gives a sequence
$ H_0=G,\ H_1,\ \dots,\ H_{k-1},$
together with $3$-edge-connected graphs
$G_1,G_2,\dots,G_k,H_k,$
where for each $i\in\{1,\dots,k\}$ the graph $H_{i-1}$ is reduced along a
$2$-edge-cut into two graphs $G_i$ and $H_i$ (for $i<k$), and in the final
step the graph $H_{k-1}$ is reduced into two $3$-edge-connected graphs
$G_k$ and $H_k$.

To achieve the required running time, we store the graph and the cactus using adjacency lists and maintain the list $L$ of degree-2 nodes of the cactus. 
For each cactus node $x$ we store the list of vertices $\varphi^{-1}(x)$. 
For each cactus edge we store the corresponding external edge of the current graph. 
Each cactus update does bounded amount of work, therefore all cactus updates and reduction records can be produced in total $O(n)$ time. 

During the reduction phase we store a list $\mathcal R$ of reduction records.
Each record consists of a tuple $(i,e_1,e_2,e'_1,e'_2)$
describing a performed $2$-cut reduction where $e_1,e_2$ are the removed edges
and $e'_1,e'_2$ are the newly created edges.

A well-spread perfect matching is then constructed backwards. First we find an
arbitrary well-spread perfect matching in one of the final graphs, say $H_k$.
Then we find a well-spread perfect matching in $G_k$ that agrees with it on
the reduction edge. By Theorem~\ref{thm:2cut-gluing}, these two matchings can be glued
to a well-spread perfect matching of $H_{k-1}$. Repeating this step backwards
produces a well-spread perfect matching of the original graph $G$.
Again, each step changes at most two edges, thus we do these updates in total $O(n)$ time.

\begin{algorithm}[H]
\caption{Well-spread perfect matching in a bridgeless cubic graph}
\label{alg:wspm}
\begin{algorithmic}[1]
\Require A bridgeless cubic graph $G$
\Ensure A well-spread perfect matching $M$

\If{$G$ is 3-edge-connected}
    \State \Return the output of Algorithm~1 of~\cite{GS25}
\Else
    \State Compute a cactus representation $(T,\varphi)$ of all $2$-edge-cuts of $G$
    \Comment{\cite{Tsin2023113760} in $O(n)$ time}
    \State Make a list $L$ of degree-$2$ nodes in $T$
    \State $H_0 \gets G$
    \State $\mathcal R \gets$ empty list
    \State $i \gets 1$

    \Comment{Part 1: reduction sequence}
    \While{$E(T)\neq\emptyset$}
        \State Choose a node $x\in L$
        \State Let $a=xy$ and $b=xz$ be the two edges of $T$ incident with $x$
        \State Let $e_1$ be an edge of $H_{i-1}$ with one end in $\varphi^{-1}(x)$ and the other in $\varphi^{-1}(y)$
        \State Let $e_2$ be an edge of $H_{i-1}$ with one end in $\varphi^{-1}(x)$ and the other in $\varphi^{-1}(z)$
        \State Perform the $\{e_1,e_2\}$-reduction of $H_{i-1}$,
        obtaining graphs $G_i$ and $H_i$ and new edges
        $e'_1\in E(H_i)$ and $e'_2\in E(G_i)$
        \State Append $(i,e_1,e_2,e'_1,e'_2)$ to $\mathcal R$
        \State Perform the $\{a,b\}$-reduction of $T$, deleting the isolated node $x$ and any loop edges created
        \State Update $L$
        \State $i \gets i+1$
    \EndWhile

    \State $k \gets i-1$

    \Comment{Part 2: backward construction}
    \State Compute any WSPM $M$ of $H_k$
    \Comment{\cite{GS25} in $O(n_0\log^4 n_0)$ time}

    \For{the records $(i,e_1,e_2,e'_1,e'_2)$ of $\mathcal R$ in reverse order}
        \If{$e'_1\in M$}
            \State Compute a WSPM $M_i$ of $G_i$ containing $e'_2$
            \Comment{Theorem~\ref{thm:edge_forced_wspm}}
        \Else
            \State Compute a WSPM $M_i$ of $G_i$ avoiding $e'_2$
            \Comment{Corollary~\ref{thm:edge_forced_wspm_not_in}}
        \EndIf
        \State Replace $M$ by the matching obtained by gluing $M$ and $M_i$
    \EndFor

    \State \Return $M$
\EndIf
\end{algorithmic}
\end{algorithm}

\mainthm*


\begin{proof}
Let $n=|V(G)|$. Since $G$ is cubic, we have $|E(G)|=O(n)$.

We first prove correctness. Let $H_0=G$. During the reduction phase, the algorithm maintains a current bridgeless cubic graph $H_{i-1}$ together with a cactus representation $(T,\varphi)$ of all $2$-edge-cuts of $H_{i-1}$. In each iteration, the algorithm chooses a vertex $x$ of degree $2$ in $T$, let $a=xy$ and $b=xz$ be the two incident cactus edges, and selects the corresponding edges $e_1,e_2$ in $H_{i-1}$. These edges form a $2$-edge-cut in $H_{i-1}$. Performing the $\{e_1,e_2\}$-reduction produces two graphs $G_i$ and $H_i$, which are both bridgeless cubic by Lemma~\ref{lem:2cut-reduction-bridgeless}. By Lemma~\ref{lem:cactus-reduction}, after performing the corresponding $\{a,b\}$-reduction in $T$, the restricted mapping yields again a cactus representation of all $2$-edge-cuts of $H_i$. Hence the invariant is preserved throughout the reduction phase.
When the reduction phase terminates, the cactus has no edges. Since it represents all $2$-edge-cuts of $H_k$, the graph $H_k$ has no $2$-edge-cuts. As $H_k$ is bridgeless, it follows that $H_k$ is $3$-edge-connected.
Now consider a graph $G_i$ separated during some reduction step. By Lemma~\ref{lem:cactus-reduction}, the corresponding cactus representation of $G_i$ has no edges, since $G_i$ is obtained from a single component created at a degree-$2$ cactus vertex. Therefore $G_i$ has no $2$-edge-cuts. As $G_i$ is bridgeless cubic, it follows that $G_i$ is $3$-edge-connected. Thus all graphs
$G_1,G_2,\dots,G_k,H_k$
are $3$-edge-connected cubic graphs.

We now prove correctness of the backward phase. First, the algorithm computes a well-spread perfect matching $M$ of $H_k$. For reduction records in reverse order, at each step we glue a well-spread perfect matching of $H_i$ with a well-spread perfect matching of $G_i$ that agrees on the reduction edges. By induction, this produces a well-spread perfect matching of $H_{i-1}$. At the end, we obtain a well-spread perfect matching of $H_0=G$.

It remains to analyze the running time. The cactus representation of all
$2$-edge-cuts of $G$ can be computed in time $O(n)$ and has $O(n)$ size. 
Each iteration of the reduction phase removes at least one edge from the cactus, and by
Observation~\ref{CactusSize} the cactus has $O(n)$ edges. Hence the reduction
phase takes $O(n)$ time.
Let $n_i=|V(G_i)|$ for $i=1,\dots,k$ and let $n_0=|V(H_k)|$. 
Since the reductions partition the original vertex set among the final pieces, we have
$$
  \sum_{i=0}^{k} n_i = n.
$$
For each of these graphs, the algorithm computes one well-spread perfect
matching with a prescribed edge. By Theorem~\ref{thm:edge_forced_wspm} and
Corollary~\ref{thm:edge_forced_wspm_not_in}, this takes time
$O(n_i\log^4 n_i)$ on a graph with $n_i$ vertices.
Since $n_i\le n$ for every $i$, we have $\log n_i\le \log n$, and therefore
$
    n_i\log^4 n_i \le n_i\log^4 n.
$
Summing over all pieces gives
$$
    \sum_{i=0}^k n_i\log^4 n_i
    \le
    \sum_{i=0}^k n_i\log^4 n
    =
    \log^4 n\sum_{i=0}^k n_i
    =
    n\log^4 n.
$$
Thus the total time spent computing matchings in the backward phase is
$O(n\log^4 n)$.

Finally, each gluing step takes constant time and there are $O(n)$ such
steps. Therefore the total running time of Algorithm~\ref{alg:wspm} is
$O(n\log^4 n)$. This completes the proof.
\end{proof}

\section{Final remarks}

We described an almost linear algorithm that finds a WSPM in a cubic bridgeless graph. 
Boyd et al. also gave a polynomial-time algorithm that in a cubic bridgeless graph 
finds a 2-factor intersecting every 3-edge-cut and every 4-edge-cut. 

\begin{question}
  Can we do that in near-linear time?
\end{question}

\bibliography{RN.bib}

\appendix
\section{Appendix}


\thmgluing*

\begin{proof}
Let $e_1=u_1v_1$ and $e_2=u_2v_2$, where
$u_1,u_2\in V(G_1)$ and $v_1,v_2\in V(G_2)$.
In the $2$-cut reduction, the edges $e_1$ and $e_2$ are replaced by
$e'_1=u_1u_2 \in E(G_1)$, and $e'_2=v_1v_2 \in E(G_2)$.

We construct a perfect matching $M$ of $G$ as follows.
If neither $e'_1\in M_1$ nor $e'_2\in M_2$, set
$M = M_1 \cup M_2$.
Otherwise both $e'_1\in M_1$ and $e'_2\in M_2$, and we set
$M = (M_1\setminus\{e'_1\}) \cup (M_2\setminus\{e'_2\}) \cup \{e_1,e_2\}$.
We call $M$ the \emph{gluing} of $M_1$ and $M_2$. In both cases every vertex of $G$ is incident with exactly one edge of $M$:
$M_1$ and $M_2$ are perfect matchings in disjoint parts of the graph,
and in the second case the removal of $e'_1, e'_2$ and insertion of $e_1, e_2$
preserves the matching at their endpoints. Therefore, $M$ is a perfect matching of $G$.

It remains to prove that $M$ is well-spread.
Let $C = \{f_1, f_2, f_3\}$ be any $3$-edge-cut of $G$. We show that $M$ contains exactly one of $f_1$, $f_2$, or $f_3$. Since $\{e_1,e_2\}$ is a $2$-edge-cut,
no $3$-edge-cut can contain both $e_1$ and $e_2$. 
Therefore, either $C$ contains one of $e_1$ and $e_2$ or it contains neither of them. By symmetry between $G_1$ and $G_2$, it suffices to consider only those
3-edge-cuts that contain at least one edge of $G_1$. The remaining
possibilities where the cut lies entirely in $G_2$ or contains $e_2$
are handled analogously. We have the following cases:
\begin{enumerate}
    \item $f_1, f_2, f_3 \in  E(G_1\backslash e'_1)$. Since $M_1$ is well-spread in $G_1$, exactly one edge of 
        $f_1$, $f_2$, or $f_3$ belongs to $M_1$, say $f_1$. By construction of $M$, the intersection of $M$ with $C$
        coincides with the intersection of $M_1$ with $C$. Therefore, $M$ only contains $f_1$.

    \item $f_2,f_3\in E(G_1\setminus e'_1)$ and $f_1=e_1$. Let $H_1$ and $H_2$ be the components of $G-\{e_1,e_2\}$.
        Let $S_1$ and $S_2$ be the components of $G-\{e_1,f_2,f_3\}$.
        Since $f_2,f_3\in E(G_1\setminus e'_1)=E(H_1)$, they are not in $H_2$.
        Because $e_2\notin\{e_1,f_2,f_3\}$, both ends of $e_2$ lie in the same component of
        $G-\{e_1,f_2,f_3\}$. Assume $u_2,v_2\in S_2$.
        Also $v_1\in S_2$, since $v_1$ is adjacent to $u_1$ only via the removed edge $e_1$ and
        $H_2$ is connected.
        We claim that in fact $V(H_2)\subseteq S_2$.
        Indeed, if some vertex of $H_2$ lay in $S_1$, then (as $H_2$ is connected) there would be a path in
        $H_2$ from that vertex to $v_1\in S_2$, which would avoid $\{e_1,f_2,f_3\}$ and hence contradict that $\{e_1,f_2,f_3\}$ is a 3-edge-cut.

        Consequently, $u_1\in S_1$ and $u_2\in S_2\cap V(G_1)$, so the added edge $e_1'=u_1u_2$ goes between
        $S_1$ and $S_2\cap V(G_1)$.
        Moreover, the only edges of $H_1$ between $S_1$ and $S_2$ are $f_2$ and $f_3$, since
        $G-\{e_1,f_2,f_3\}$ has exactly the two components $S_1,S_2$.
        Hence $\{e_1',f_2,f_3\}$ is a $3$-edge-cut of $G_1$.
        Since $M_1$ is well-spread in $G_1$, we have
        $|M_1\cap\{e_1',f_2,f_3\}|=1$.
        \smallskip
        
        Finally, we show that $M$ meets $\{e_1,f_2,f_3\}$ in exactly one edge.

        \smallskip
        \noindent\emph{If $e_1'\notin M_1$ and $e_2'\notin M_2$,} then $M=M_1\cup M_2$ by construction.
        Thus $e_1\notin M$, and since $|M_1\cap\{e_1',f_2,f_3\}|=1$ with $e_1'\notin M_1$, we get
        $|M\cap\{f_2,f_3\}|=1$. Therefore $|M\cap\{e_1,f_2,f_3\}|=1$.

        \smallskip
        \noindent\emph{If $e_1'\in M_1$ and $e_2'\in M_2$,} then the construction replaces $e_1',e_2'$ by
        $e_1,e_2$, so
        $
        M=(M_1\cup M_2)\cup\{e_1,e_2\}\setminus\{e_1',e_2'\}.
        $
        Since $e_1'\in M_1$ and $|M_1\cap\{e_1',f_2,f_3\}|=1$, we have $f_2,f_3\notin M_1$ and hence
        $f_2,f_3\notin M$. Also $e_1\in M$ by construction. Therefore again
        $|M\cap\{e_1,f_2,f_3\}|=1$.

    \item $f_1 = e_1$, $f_2\in E(G_1\setminus e'_1)$ and $f_3 \in E(G_2\setminus e'_2)$.
        We show that this case is impossible.

        Let $f_2=x_1x_2$ and $f_3=y_1y_2$.
        Since $e_2$ is not in the $3$-edge-cut $\{e_1,f_2,f_3\}$, both its ends
        must lie in the same component of $G-\{e_1,f_2,f_3\}$.
        Without loss of generality, assume that $u_2,v_2\in S_2$.
        Then
        \begin{itemize}
            \item $u_1,u_2,x_1,x_2\in H_1$,
            \item $v_1,v_2,y_1,y_2\in H_2$,
            \item $u_1,x_1,y_1\in S_1$ (otherwise, relabel the endpoints of $f_2$ and $f_3$ and exchange $G_1$ with $G_2$)
            \item $v_1,v_2,x_2,y_2,u_2\in S_2$.
        \end{itemize}

        \begin{center}
            \includegraphics[scale=0.16]{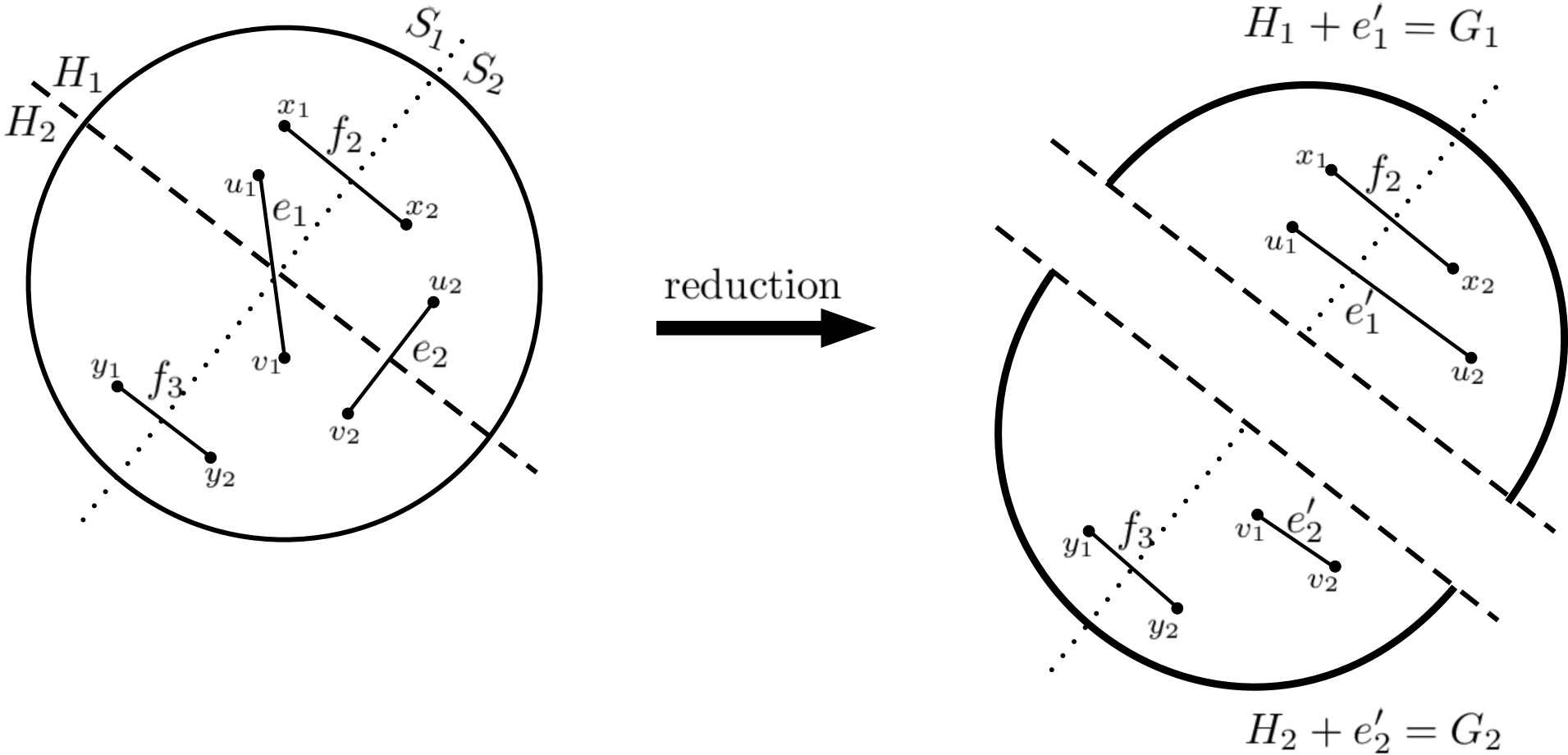}
        \end{center}

        We claim that $f_3=y_1y_2$ is a bridge in $G_2=H_2+e'_2$.
        Suppose not. Then there exists a $y_1$--$y_2$ path $P$ in
        $G_2-\{f_3\}$.

        First suppose that $e'_2\notin P$.
        Then $P$ lies entirely in $H_2-\{f_3\}$.
        Since $e_1$ and $f_2$ are not in $H_2$, the same path $P$ exists in
        $G-\{e_1,f_2,f_3\}$, which connects $y_1\in S_1$ to
        $y_2\in S_2$.
        This contradicts that $S_1$ and $S_2$ are the components of
        $G-\{e_1,f_2,f_3\}$.
        \begin{center}
            \includegraphics[scale=0.16]{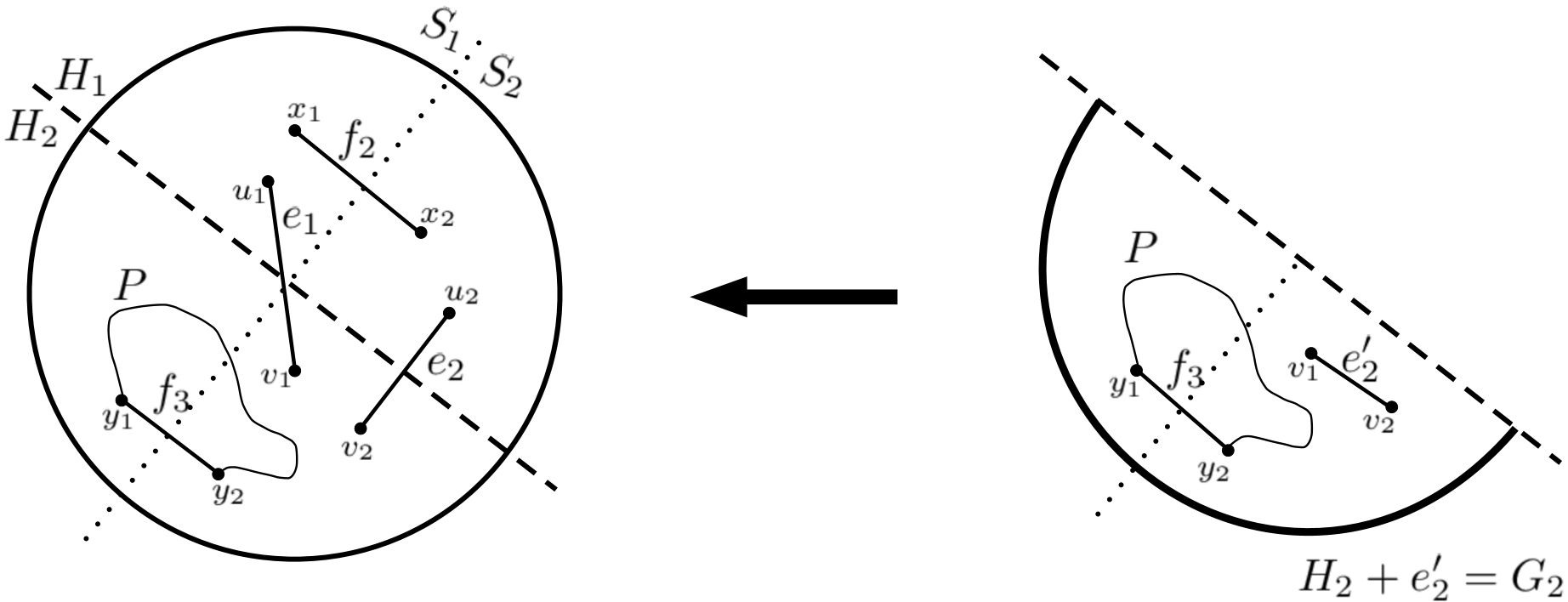}
        \end{center}
        Now suppose that $e'_2\in P$.
        Recall $e'_2=v_1v_2$ and let $Q$ be the maximal initial subpath of $P$
        from $y_1$ to the first vertex of $\{v_1,v_2\}$ encountered along $P$.
        Then $Q$ avoids $e'_2$ and hence lies entirely in $H_2-\{f_3\}$.
        Therefore $Q$ is a path in $G-\{e_1,f_2,f_3\}$.
        But $Q$ connects $y_1\in S_1$ to a vertex of $\{v_1,v_2\}\subseteq S_2$,
        again contradicting that $S_1$ and $S_2$ are the components of
        $G-\{e_1,f_2,f_3\}$.
        \begin{center}
            \includegraphics[scale=0.16]{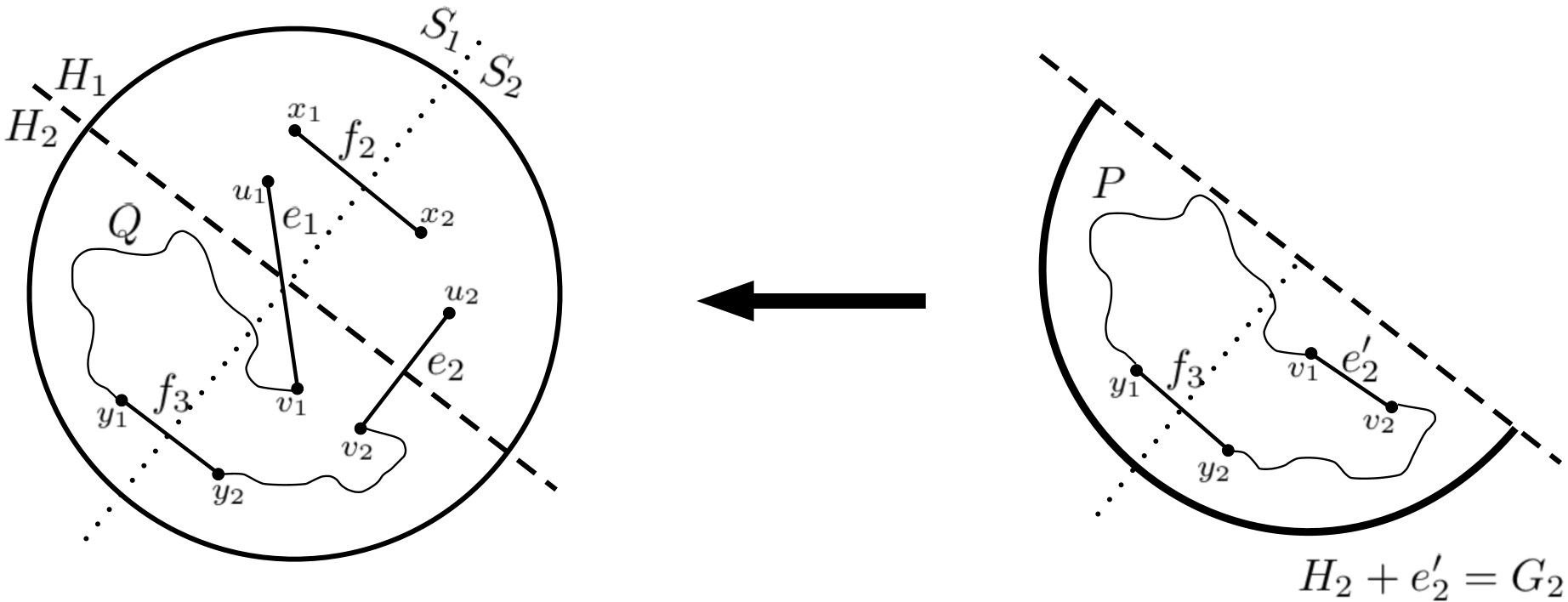}
        \end{center}

        Thus no such path $P$ exists and $f_3$ is a bridge in $G_2$,
        contradicting that $G_2$ is bridgeless.
        Therefore this case cannot occur.

    \item $f_1, f_2 \in E(H_1)$ and $f_3 \in E(H_2)$.
        Recall that $e_1=u_1v_1$ and $e_2=u_2v_2$ with $u_1,u_2\in V(H_1)$ and
        $v_1,v_2\in V(H_2)$, and that $e_1'=u_1u_2\in E(G_1)$ and
        $e_2'=v_1v_2\in E(G_2)$.

        We first claim that $v_1$ and $v_2$ lie in different components of
        $G-\{f_1,f_2,f_3\}$. Suppose for a contradiction that $v_1,v_2\in S_1$
        (the case $v_1,v_2\in S_2$ is symmetric). Since $f_3$ is the only edge of
        $\{f_1,f_2,f_3\}$ contained in $H_2$, it is the only edge of $H_2$ with one
        end in $S_1\cap V(H_2)$ and the other in $S_2\cap V(H_2)$. As $v_1,v_2\in S_1$,
        the added edge $e_2'=v_1v_2$ also has both ends in $S_1$, so it does not cross
        between $S_1\cap V(H_2)$ and $S_2\cap V(H_2)$. Therefore, $f_3$ is a bridge in
        $G_2=H_2+e_2'$, contradicting that $G_2$ is bridgeless. Therefore
        $v_1$ and $v_2$ lie in different components. Assume $v_1\in S_1$ and $v_2\in S_2$.

        It follows that in $G_2$ the only edges between $S_1\cap V(G_2)$ and $S_2\cap V(G_2)$
        are $f_3$ and $e_2'$, so $\{f_3,e_2'\}$ is a $2$-edge-cut of $G_2$.
        By Lemma~\ref{lem:two_edge_cut_matching}, every perfect matching of a bridgeless cubic graph meets a $2$-edge-cut
        in either $0$ or $2$ edges, and hence for $M_2$ we have
        $f_3\in M_2 \quad\Longleftrightarrow\quad e_2'\in M_2.$
        On the $H_1$ side, since $u_1\in S_1$ and $u_2\in S_2$, the added edge $e_1'=u_1u_2$ goes between $S_1$, and $S_2$ in $G_1$. Moreover, the only original edges of $H_1$
        between $S_1$ and $S_2$ are $f_1$ and $f_2$. Thus $\{f_1,f_2,e_1'\}$ is a $3$-edge-cut in
        $G_1$. Since $M_1$ is well-spread in $G_1$, it contains exactly one edge of this cut:
        $|M_1\cap\{f_1,f_2,e_1'\}|=1.$
        
        Finally we compute $|M\cap\{f_1,f_2,f_3\}|$.

        \smallskip
        \noindent\emph{If $e_1'\notin M_1$ and $e_2'\notin M_2$,} then by construction
        $M=M_1\cup M_2$. From $|M_1\cap\{f_1,f_2,e_1'\}|=1$ and $e_1'\notin M_1$ we get
        $|M\cap\{f_1,f_2\}|=1$, and from $e_2'\notin M_2$ and $f_3\in M_2\Leftrightarrow e_2'\in M_2$
        we get $f_3\notin M$. Therefore, $|M\cap\{f_1,f_2,f_3\}|=1$.

        \smallskip
        \noindent\emph{If $e_1'\in M_1$ and $e_2'\in M_2$,} then the construction replaces $e_1',e_2'$
        with $e_1,e_2$, i.e.
        \[
        M=(M_1\cup M_2)\cup\{e_1,e_2\}\setminus\{e_1',e_2'\}.
        \]
        Since $e_1'\in M_1$ and $|M_1\cap\{f_1,f_2,e_1'\}|=1$, we have
        $M\cap\{f_1,f_2\}=\emptyset$. Also $e_2'\in M_2$ implies $f_3\in M_2$, and $f_3$ is not
        removed in the construction, so $f_3\in M$. As $e_1,e_2\notin\{f_1,f_2,f_3\}$ in this case,
        it follows again that $|M\cap\{f_1,f_2,f_3\}|=1$.

        In both subcases, $M$ meets the $3$-edge-cut $\{f_1,f_2,f_3\}$ in exactly one edge.
\end{enumerate}

\end{proof}

\end{document}